    \numberwithin{equation}{section}
    \def\Re{{\rm Re \,}}
    \def\Im{{\rm Im \,}}
    \def\I{{\rm I \,}}
    \def\II{{\rm II \,}}
    \def\III{{\rm III \,}}
    \def\IV{{\rm IV \,}}
    \def\bigO{{\cal O}}
    \def\Res{{\rm Res}}
\newcommand{\e}{\epsilon}
\newcommand{\lb}{\lambda}
\newcommand{\ra}{\rightarrow}
    \newtheorem{theorem}{Theorem}[section]
    \newtheorem{lemma}[theorem]{Lemma}
    \newtheorem{proposition}[theorem]{Proposition}
    \newtheorem{Definition}[theorem]{Definition}
    \newtheorem{Remark}[theorem]{Remark}
    \newenvironment{remark}{\begin{Remark}\rm}{\end{Remark}}
    \newtheorem{Example}[theorem]{Example}
    \newtheorem{Assumptions}[theorem]{Assumptions}
    \newenvironment{proof}%
    {\rm \trivlist \item[\hskip \labelsep{\bf Proof. }]}%
    {\hspace*{\fill}$\Box$\endtrivlist}
    {\rm \trivlist \item[\hskip \labelsep{\bf Proof}]}%
    {\hspace*{\fill}$\Box$\endtrivlist}
\begin{document}
\title{Universality of the break-up profile for the KdV equation in the small dispersion limit using the
Riemann-Hilbert approach}
\author{T. Claeys and T. Grava}

\maketitle

\begin{abstract}
We obtain an asymptotic expansion for the solution of the Cauchy problem for the Korteweg-de Vries (KdV) equation
\[
u_t+6uu_x+\epsilon^{2}u_{xxx}=0,\quad u(x,t=0,\epsilon)=u_0(x),
\]
for $\epsilon$ small, near the point of gradient catastrophe $(x_c,t_c)$ for the solution  of the dispersionless equation $u_t+6uu_x=0$.
 The sub-leading term in this expansion is described by the smooth solution of a fourth order ODE, which is a higher order analogue to the Painlev\'e I equation. This is in accordance with a conjecture of Dubrovin, suggesting that this is a universal phenomenon for any Hamiltonian perturbation of a hyperbolic equation. Using the Deift/Zhou steepest descent method applied on the Riemann-Hilbert problem for the KdV equation, we are able to prove the asymptotic expansion rigorously in a double scaling limit.
\end{abstract}

\section{Introduction}
It is well-known that the solution of the Cauchy problem for the Hopf equation
\begin{equation}
\label{hopf}
u_t+6uu_x=0,\quad u(x,t=0)=u_0(x), \;\;x\in\mathbb{R},\;t\in\mathbb{R}^+,
\end{equation}
reaches a point of gradient catastrophe in a finite time. The solutions of the  dissipative
 and conservative regularizations of the above hyperbolic equation display a
considerably different behavior.
Equation (\ref{hopf}) admits a Hamiltonian structure
\[
u_t +\{ u(x), H_0\}\equiv u_t +\partial_x\frac{\delta H_0}{\delta u(x)} =0,\quad
\]
with Hamiltonian and Poisson bracket given by
\[
 H_0 =\int u^3\, dx, \qquad
\{ u(x) , u(y)\}=\delta'(x-y),
\]
respectively.  The conservative regularization of (\ref{hopf}) can be obtained by
perturbing the Hamiltonian $H_0$ to the form
$H=H_0+\epsilon H_1+\epsilon^2H_2+\dots$ \cite{Getzler} (see also \cite{DMS,DZ}).
All the Hamiltonian perturbations up to the order $\epsilon^4$ of the hyperbolic equation
(\ref{hopf}) have been classified in \cite{dubcr, Lorenzoni} and
the   Hamiltonian is equal to
\[
H_{\epsilon}=\int \left[ u^3 - \epsilon^2 \frac{c(u)}{24} u_x^2
+\epsilon^4  p(u) u_{xx}^2 \right]\, dx,
\]
where  $c(u)$, $p(u)$  are two arbitrary functions of one variable.
The  equation $u_t +\{ u(x), H_{\epsilon}\}=0$ takes the form
\begin{equation}
\begin{split}
\label{riem2}
&
u_t +6u\, u_x + \frac{\epsilon^2}{24} \left[ 2 c\, u_{xxx} + 4 c' u_x u_{xx}
+ c'' u_x^3\right]+\epsilon^4 \left[ 2 p\, u_{xxxxx} \right.\\
&
\\
&\left.
+2 p'( 5 u_{xx} u_{xxx} + 3 u_x u_{xxxx}) + p''( 7 u_x u_{xx}^2 + 6 u_x^2 u_{xxx} ) +2 p''' u_x^3 u_{xx}\right]=0,
\end{split}
\end{equation}
where the prime denotes the derivative with respect to $u$.
For $c(u)=12$, $p(u)=0$, one obtains the Korteweg - de Vries (KdV)
\begin{equation}
\label{KdV}
u_t+6uu_x+\epsilon^{2}u_{xxx}=0,
\end{equation} and for $c(u)=48u $ and
$p(u)=2u$, one has the Camassa-Holm equation \cite{CH} up to order $\epsilon^{4}$.
For generic  choices of the functions $c(u)$, $p(u)$ equation (\ref{riem2})
is apparently not an integrable PDE, but however it
admits an infinite family of commuting Hamiltonians up to order $\bigO(\e^6).$

The case of small dissipative perturbations
of one-component hyperbolic equations has been well studied and understood
(see \cite{bressan} and references therein), while the  behavior of solutions
to conservative perturbations (\ref{riem2}) to the  best of our knowledge
has not been investigated after the point of gradient catastrophe of the unperturbed
equation except for the  KdV case.
The solution of the Cauchy problem for KdV in the  limit $\epsilon\ra 0$, has been studied in
the works of Gurevich and Pitaevskii, \cite{GP},
 Lax and Levermore \cite{LL}, Venakides \cite{V2}, and Deift, Venakides and Zhou \cite{DVZ, DVZ2}. The asymptotic description of
\cite{LL, DVZ} gives in general a good approximation of the KdV
solution, but it is less satisfactory near the point of gradient
catastrophe for the Hopf equation (\ref{hopf}).
Before this break-up time, solutions to the KdV  equation are,
in the small dispersion limit $\epsilon\to 0$, well approximated by solutions to the Hopf equation. After the time of gradient catastrophe, solutions to the Hopf equation cease to be well-defined for all $x$, while for any $\epsilon>0$,
  solutions to the KdV equation remain  well-defined for all $x$ and $t$
and are characterized by an oscillatory region after the gradient catastrophe. In this oscillatory region, small dispersion asymptotics for KdV solutions turn out to be elliptic \cite{V2}.
The transition from the  asymptotic regime described by (\ref{hopf}) to the
elliptic regime has not been rigorously described yet in the literature. For numerical comparisons we refer to \cite{GK,GK1}.

\medskip

This problem has been addressed by Dubrovin in \cite{dubcr}, where he formulated
the universality conjecture about the behavior of a generic solution
to any Hamiltonian perturbation of a hyperbolic equation (\ref{hopf}) near the
point $(x_c,t_c,u_c=u(x_c,t_c,0))$ of gradient catastrophe for (\ref{hopf}).
Dubrovin argued that, up to shifts, Galilean transformations and rescalings,
the behavior of the solution of (\ref{riem2}) as $\epsilon\rightarrow 0$
near the point of gradient catastrophe for the Hopf equation (\ref{hopf})  essentially depends
neither on the choice of the solution nor on the choice of the
equation. Moreover, the solution near the point $(x_c, t_c, u_c)$
is given by
\begin{equation}
\label{univer0} u(x,t,\e)\simeq u_c +\left(\dfrac{\epsilon^2c_0}{k^2}\right)^{1/7} U \left(\dfrac{
 x- x_c-6u_c (t-t_c)}{(kc_0^3\epsilon^6)^{1/7}},
\dfrac{6(t-t_c)}{(k^3c_0^2\epsilon^4)^{1/7}}\right) +O\left( \epsilon^{4/7}\right),
\end{equation}
where   $c_0$ is a constant that depends on the equation, while $k$ depends on the initial data,  and $U=U(X,T)$ is the unique real smooth solution to the fourth order ODE
\begin{equation}\label{PI20}
X=T\, U -\left[ \dfrac{1}{6}U^3  +\dfrac{1}{24}( U_{X}^2 + 2 U\, U_{XX} )
+\frac1{240} U_{XXXX}\right],
\end{equation}
which is  the second member of the Painlev\'e I hierarchy. In what follows, we will
call this equation $P_I^2$. The relevant solution is
characterized by the asymptotic behavior
     \begin{equation}
\label{PI2asym}
        U(X,T)=\mp (6|X|)^{1/3}\mp \frac{1}{3}6^{2/3}T|X|^{-1/3}
            +\bigO(|X|^{-1}),
            \qquad\mbox{as $X\to\pm\infty$,}
\end{equation}
for each fixed $T\in\mathbb{R}$. The uniqueness of a smooth
solution to (\ref{PI20}) for all $X,T\in\mathbb{R}$ satisfying
(\ref{PI2asym}) follows from earlier results \cite{Menikoff,
Moore}, while the existence has been proven in \cite{CV1}.

\medskip

The aim of this paper is to prove rigorously that the expansion
(\ref{univer0}) holds indeed in the particular case of the small dispersion limit of the KdV equation near
the point of gradient catastrophe for the Hopf equation (\ref{hopf}).
More precisely the solution
$u(x,t,\epsilon)$ of the KdV equation in the neighborhood of
$(x_c,t_c,u_c)$ has an asymptotic expansion as follows,
\begin{equation}
\label{univer}
u(x,t,\e)\simeq u_c +\left(\dfrac{2\epsilon^2}{k^2}\right)^{1/7}
U \left(
\dfrac{x- x_c-6u_c (t-t_c)}{(8k\epsilon^6)^{\frac{1}{7}}},
\dfrac{6(t-t_c)}{(4k^3\epsilon^4)^{\frac{1}{7}}}\right) +O\left( \epsilon^{4/7}\right).
\end{equation}
This expansion holds in the double scaling limit where we let $\epsilon\to 0$ and at the same time $x\to x_c$ and $t\to t_c$ in such a way that
\[\lim\dfrac{x- x_c-6u_c (t-t_c)}{(8k\e^6)^{1/7}}=X, \qquad \lim\dfrac{6(t-t_c)}{(4k^3\epsilon^4)^{1/7}}= T,\]
with $X,T\in\mathbb R$.
The constant $k$ is given by
\[
k=-f_-'''(u_c),
\]
where $f_-$ is the inverse function of the decreasing part of the initial data
$u_0(x)$, which is assumed to be real analytic and with a single negative bump.

\medskip

The universality conjecture of Dubrovin should be seen in comparison
to the known universality results in random matrix theory. For large
unitary random matrix ensembles, local eigenvalue statistics turn
out to be, to some extent, independent of the choice of the ensemble
and independent of the reference point chosen \cite{Deift, DKMVZ2,
DKMVZ1}. Critical break-up times occur when the eigenvalues move
from a one-cut regime to a multi-cut regime. These transitions can
take place in the presence of singular points, of which three
different types are distinguished \cite{DKMVZ2}. Singular interior
points show remarkable similarities with the leading edge of the
oscillatory region for the KdV equation, while singular exterior
points should be compared to the trailing edge of the oscillatory
region. Our focus is on the point of gradient catastrophe, i.e.\ the
break-up point where the oscillations start to set in. This
situation is comparable to a singular edge point in unitary random
matrix ensembles. It was conjectured by Bowick and Br\'ezin, and by
Br\'ezin, Marinari, and Parisi \cite{BB, BMP} that local eigenvalue
statistics in this regime should be given in terms of the Painlev\'e
I hierarchy. In \cite{CV2}, it was proven that indeed double scaling
limits of the local eigenvalue correlation kernel are given in terms
of the Lax pair for the $P_I^2$ equation (\ref{PI20}). In addition,
an expansion similar to (\ref{univer0}) was obtained for the
recurrence coefficients of orthogonal polynomials related to the
relevant random matrix ensembles.

In the setting of PDEs a universality result similar to the KdV case
 has been conjectured  for the semiclassical limit of the focusing nonlinear Schr\"odinger equation \cite{DGK}. In this case the
limiting equations are elliptic and the focusing nonlinear Schr\"odinger equation in the semiclassical limit is considered as a Hamiltonian perturbation of the elliptic system.

\subsection{Statement of result}

Our goal is to find asymptotics as $\e\to 0$ for the solution
$u(x,t,\e)$ of the KdV equation
\[
 u_t+6uu_x+\epsilon^{2}u_{xxx}=0,\quad u(x,0,\e)=u_0(x),
\]
 when $x, t$ are close to the point and time of
gradient catastrophe $x_c,t_c$ for the Hopf equation
\[u_t+6uu_x=0.\]
By the method of characteristics, the solution of the Hopf equation takes the form
\[
u(x,t)=u_0(\xi),\quad x=6tu_0(\xi)+\xi,
\]
so that
\[
u_x(x,t)=\dfrac{u_0'(\xi)}{1+6tu_0'(\xi)},\quad x=6tu_0(\xi)+\xi,
\]
from which one observes that
a gradient catastrophe is reached for
\[
t_c=\dfrac{1}{\max_{\xi\in\mathbb{R}}[-6u'_0(\xi)]}.
\]
We impose the following conditions on the initial data $u_0$.
\begin{Assumptions}\label{assumptions}\
\begin{itemize}
\item[(a)] $u_0(x)$ decays at $\pm\infty$ such that
\begin{equation}
\int_{-\infty}^{+\infty}u_0(x)(1+x^2)dx<\infty,
\end{equation}
\item[(b)] $u_0(x)$ is real analytic and has an analytic continuation to the complex plane in the domain
\[
\mathcal{S}=\{z\in\mathbb{C}: |\Im z|<\tan\theta_0|\Re z|\}\cup\{z\in\mathbb C: |\Im z|<\sigma\}
\]
where $0<\theta_0<\pi/2$ and $\sigma>0$,
\item[(c)] $u_0(x)<0$ and has a single local minimum at a certain point
$x_M$, with
\[u_0'(x_M)=0, \qquad u_0''(x_M)>0,\]$u_0$ is normalized such
that $u_0(x_M)=-1$.
\end{itemize}
\end{Assumptions}
Condition (a) is necessary  to apply the inverse scattering transform \cite{BDT}, while condition (b) is requested in order to get some analyticity properties of the reflection and transmission coefficients for the scattering problem (see below in Section \ref{section: inverse scattering}). Condition (c) is  imposed in order to have the simplest situation in the study of the semiclassical limit of the reflection coefficient of the associated Schr\"odinger equation.

 Let $x_M(t)$ be the $x$-coordinate where $u(x,t)$
reaches the minimum value $-1$. Then for $t<t_c$ we have that
\begin{align}\label{uxx1}
&u_x(x,t)=\dfrac{1}{6t+f'_-(u(x,t))}<0,&\mbox{ for $x<x_M(t)$,}\\
&u_x(x,t)=\dfrac{1}{6t+f'_+(u(x,t))}>0,& \mbox{ for $x>x_M(t)$,}
\end{align}
where $f_{\pm}$ are the inverses of the increasing and decreasing
part of the initial data $u_0(x)$,
\[f_\pm(u_0(x))=x, \qquad f_-(-1,0)=(-\infty, x_M), \qquad f_+(-1,0)=(x_M,+\infty).\]
Since $f_-'$ is negative, it follows that there exists a time $t=t_c$ for which
(\ref{uxx1}) goes to infinity. This happens at the time
\begin{equation}
\label{tc1}
t_c=-\dfrac{1}{6}\max_{\xi\in(-1,0)}f'_-(\xi).
\end{equation}
The above relation shows that the point of gradient catastrophe is characterized also  by
\[
f''_-(u_c)=0.
\]
Since we have assumed that $u_0''(x_M)\neq 0$, it follows that
$x_c<x_M(t_c)$. Summarizing,  the point of gradient catastrophe is
characterized by (\ref{tc1}) and the three equations
\begin{equation}
\label{xc1}
x_c=6t_cu_c+f_-(u_c),\qquad 6t_c+f'_-(u_c)=0,\qquad f_-''(u_c)=0.
\end{equation}
The point of gradient catastrophe is generic if
\begin{equation}
\label{genericity}
f_-'''(u_c)\neq 0.
\end{equation}

\medskip

Our main result is the following.

\begin{theorem}\label{theorem: main}
Let $u_0(x)$ be initial data for the Cauchy problem of the KdV equation
satisfying the conditions described in Assumptions \ref{assumptions}, and satisfying the genericity assumption (\ref{genericity}).
Write $u_c=u(x_c,t_c,0)$, with $x_c$ and $t_c$ the point and
time of gradient catastrophe given by (\ref{xc1}) and (\ref{tc1}).
Now we take a double scaling limit where we let $\epsilon\to
0$ and at the same time we let $x\to x_c$ and $t\to t_c$ in such a way that, for some $X,T\in\mathbb R$,
\begin{equation}
\lim\dfrac{x- x_c-6u_c (t-t_c)}{(8k\e^6)^{1/7}}=X, \qquad \lim\dfrac{6(t-t_c)}{(4k^3\epsilon^4)^{1/7}}= T,
\end{equation}
where
\[
k=-f_-'''(u_c),
\]
and $f_-$ is the inverse function of the decreasing part of the
initial data $u_0(x)$. In this double scaling limit the solution
$u(x,t,\epsilon)$ of the KdV equation (\ref{KdV}) has the following
expansion,
\begin{equation}
\label{expansionu}
u(x,t,\e)=u_c +\left(\dfrac{2\epsilon^2}{k^2}\right)^{1/7}
U \left(
\dfrac{x- x_c-6u_c (t-t_c)}{(8k\epsilon^6)^{\frac{1}{7}}},
\dfrac{6(t-t_c)}{(4k^3\epsilon^4)^{\frac{1}{7}}}\right) +O\left( \epsilon^{4/7}\right).
\end{equation}
Here $U(X,T)$ is the unique real pole-free solution to the $P_I^2$
equation (\ref{PI20}) satisfying the asymptotic condition
(\ref{PI2asym}).
\end{theorem}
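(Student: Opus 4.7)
The plan is to apply the Deift/Zhou nonlinear steepest descent method to the Riemann--Hilbert (RH) problem that encodes $u(x,t,\e)$ through the inverse scattering transform for the semiclassical Schr\"odinger operator $L_\e=-\e^2\partial_x^2+u_0$. Under Assumption \ref{assumptions} and for $\e$ small, $L_\e$ has no discrete spectrum, so the whole solution is determined by the reflection coefficient $r(z;\e)$, on which the KdV flow acts multiplicatively by $\exp(2i(xz+4tz^3)/\e)$. The potential is recovered from a standard $2\times 2$ RH problem $M(z;x,t,\e)$ on $\mathbb R$ via a trace identity of the form $u(x,t,\e)=-2i\e\,\partial_x\lim_{z\to\infty}zM(z)_{12}$. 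The analyticity condition (b) of Assumption \ref{assumptions} extends $r$ to a strip, and a WKB/turning-point analysis of $L_\e\psi=z^2\psi$, based on the simple-minimum condition (c), yields the semiclassical representation $r(z;\e)=\chi(z)\,e^{-\phi(z)/\e}(1+O(\e))$ on compact subsets of the upper half-plane, with $\phi$ the WKB action of $u_0$. This is the starting point of the asymptotic analysis.

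The second step is to introduce a $g$-function $g(z;x,t)$ and conjugate $M$ by $\exp(g\sigma_3/\e)$, so that for $(x,t)$ in a neighborhood of $(x_c,t_c)$ all jumps are exponentially close to the identity except on a single interval $[\alpha(x,t),\beta(x,t)]$. The endpoints and the equilibrium density on this interval are fixed by the usual variational/endpoint conditions for the phase $\theta(z;x,t)=2(xz+4tz^3)-\phi(z)$. The three relations (\ref{xc1}) characterizing the gradient catastrophe, combined with $f_-'''(u_c)=-k\neq 0$ from (\ref{genericity}), translate after a short computation into the statement that at $(x_c,t_c)$ the right endpoint $\beta$ merges with a double critical point of $\theta'$, so that the equilibrium density vanishes at $\beta$ to higher order than the usual square root. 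This degeneracy is what forces the sevenfold scaling exponents $6/7,4/7,2/7$ in (\ref{expansionu}), through the dominant balance between $\e$ and the local scales of $\theta$ at $\beta$.

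With this structure in place the RH problem is approximated by assembling three parametrices: an outer parametrix on $\mathbb C\setminus[\alpha,\beta]$ built from a Szeg\H{o}-type function; the standard Airy parametrix near the regular endpoint $\alpha$; and, crucially, a local parametrix near $\beta$ built from the Lax pair of the $P_I^2$ equation (\ref{PI20}), as in \cite{CV1,CV2}. Solvability of this local parametrix for every $(X,T)\in\mathbb R^2$ is equivalent to the pole-freeness of the distinguished solution $U(X,T)$ characterized by (\ref{PI2asym}), and this is exactly what is proved in \cite{CV1}. A Taylor expansion of $\theta$ at $(z,x,t)=(\beta(x_c,t_c),x_c,t_c)$ using (\ref{xc1}) and $f_-'''(u_c)=-k$ shows that the natural local variables of the $P_I^2$ parametrix are precisely the two scaled quantities appearing in (\ref{expansionu}), which fixes all the constants.

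The final step is the matching and a small-norm argument. On a shrinking disc around $\beta$ the $P_I^2$ parametrix matches the outer parametrix up to an error of size $\e^{1/7}$, so the ratio $R=M\cdot P_{\rm loc}^{-1}$ solves a small-norm RH problem with $R=I+O(\e^{1/7})$ uniformly in $z$. Reading off the $(1,2)$-entry at infinity and substituting into $u(x,t,\e)=-2i\e\,\partial_x\lim_{z\to\infty}zM(z)_{12}$, the leading correction to $u_c$ is precisely the $z^{-1}$ coefficient of the $P_I^2$ parametrix, which is proportional to $U(X,T)$ with the constants displayed in (\ref{expansionu}). The main obstacle is the entire critical-endpoint analysis at $\beta$: constructing the $P_I^2$ parametrix, invoking \cite{CV1} to establish its global solvability via pole-freeness of $U$, and performing the local change of variables that lands one exactly on the $(X,T)$-axes dictated by (\ref{expansionu}). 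This is the step where the analyticity condition (b) and the genericity assumption (\ref{genericity}) are used in an essential way.
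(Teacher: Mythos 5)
Your overall architecture --- inverse-scattering RH problem, $g$-function, lens opening, outer parametrix plus a local $P_I^2$ parametrix at the degenerate edge, small-norm argument --- is the same as the paper's, but two of your steps have genuine gaps. First, you fix the $g$-function by ``the usual variational/endpoint conditions,'' with endpoints $\alpha(x,t),\beta(x,t)$ moving with $(x,t)$. This is precisely what the paper must avoid: in the double scaling window the point $(x,t)$ crosses the caustic, the one-band ansatz ceases to be valid on one side of it, and variationally determined endpoints do not depend smoothly on $(x,t)$ through the catastrophe. The paper instead freezes the branch point of the $\mathcal G$-function at the critical value $u_c=u(x_c,t_c)$, defining $\mathcal G$ by the explicit formula (\ref{def G}) independently of any equilibrium condition; the price is that the jump on $(u_c,0)$ is only $i\sigma_1+\bigO(\epsilon)$ rather than exactly constant, but the gain is a RH problem that behaves uniformly in the double scaling limit. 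Relatedly, no Airy parametrix is needed anywhere: with the frozen $\mathcal G$ the jumps near $-1$ and $0$ are already uniformly close to those of $P^{(\infty)}$, and the only local parametrix is the $P_I^2$ one at $u_c$.

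Second, your error analysis is too coarse to yield the theorem as stated. From $R=I+\bigO(\epsilon^{1/7})$ alone you cannot even rigorously extract the leading correction, because the reconstruction formula carries an $x$-derivative (in the paper, $u=u_c-2i\epsilon\partial_xR_{1,12}$ with $\partial_x\sim\epsilon^{-6/7}$ on the relevant scale), so the error term must be differentiated; and the stated remainder $\bigO(\epsilon^{4/7})$ requires expanding $R$ through order $\epsilon^{2/7}$ and verifying the nontrivial cancellation that the $\epsilon^{2/7}$ term of $R_1$ has vanishing $(1,2)$-entry, so that the a priori present $\epsilon^{3/7}$ contribution to $u$ is absent. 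Your identification of where $U$ comes from is also off: it is not the $\zeta^{-1}$ coefficient of the $P_I^2$ parametrix (that is exactly the contribution which cancels), but the $\zeta^{-1/2}$ coefficient $Q$, which enters at order $\epsilon^{1/7}$ and produces $U$ only after the $x$-differentiation via $\partial_XQ=U$. Finally, a technical point you gloss over: the semiclassical asymptotics of the reflection coefficient degenerate at $\lambda=-1$ (the minimum of $u_0$), which forces the lens to be opened from $-1-\delta$ with a separate estimate involving the $\Gamma$-function there; this, rather than the parametrix construction, is where the analyticity Assumption (b) is used in an essential way.
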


\begin{remark}
The expansion (\ref{expansionu}) is very similar to the one
obtained in \cite{CV2} for the recurrence coefficients of
orthogonal polynomials with respect to a weight $e^{-nV(x)}$ on
the real line, in a double scaling limit where the potential $V$
tends, in a double scaling limit, to a critical potential with a
singular edge point.

It should also be noted that the error term in (\ref{expansionu}) is of order $\e^{4/7}$. It
will follow in a nontrivial way from our analysis that the term of order $\e^{3/7}$, which a priori seems to be present when proving Theorem \ref{theorem: main}, vanishes.
\end{remark}

\begin{remark}
If we would consider initial data that do not satisfy the generic condition (\ref{genericity}), our result is not valid any longer. It is likely that the role of the $P_I^2$ solution $U$ would then be taken over by a smooth solution to a higher member of the Painlev\'e I hierarchy.
\end{remark}

The proof of our result goes via the Riemann-Hilbert (RH) approach.
The starting point of our analysis will be the RH problem for the
KdV equation, developed in \cite{BDT, Shabat} using inverse
scattering. The Deift/Zhou steepest descent method \cite{DZ1} has
shown to be a powerful tool in order to obtain asymptotics for
solutions of RH problems. The strategy of this method is to simplify
the RH problem in several steps by applying some invertible
transformations to it. At the end this leads to a RH problem for
which asymptotics can be easily found. The Deift/Zhou steepest
descent method has not only been fruitful in the field of integrable
systems, it lead also to universality results in random matrix
theory \cite{Deift, DKMVZ2, DKMVZ1} and to remarkable combinatorial
results such as in \cite{BDJ}. For the KdV equation, a steepest
descent analysis was carried out by Deift, Venakides, and Zhou
\cite{DVZ, DVZ2}. We will follow the main lines of their approach,
but however with some important modifications which are necessary to
perform the analysis near the point of gradient catastrophe. The
first modification is to perform the steepest descent analysis of
the RH problem associated to the initial data $u_0(x)$ and to the
corresponding reflection coefficient itself, while in \cite{DVZ,
DVZ2} the reflection coefficient was identified with its WKB
approximation before carrying out the RH analysis. Also in the works
by Lax and Levermore \cite{LL} the analysis was performed on some
approximate initial data $\tilde{u}_0(x,\epsilon)$ for which
$\tilde{u}_0(x,\epsilon)\rightarrow u_0(x)$ as $\epsilon\rightarrow
0$, such that the reflection coefficient identifies with its WKB
approximation.
 The second modification concerns the so-called $\mathcal G$-function, which we need to modify in order to have a RH problem that behaves smoothly in the double scaling limit. The third and probably most essential new feature is the construction
of a local parametrix built out of the $\Psi$-functions for the $P_I^2$ equation.

\subsection{Outline for the rest of the paper}
In Section \ref{section: inverse scattering}, we give a short
overview of the inverse scattering approach in order to arrive at
the RH problem for the KdV equation. We also recall some previously
known asymptotic results on the reflection and transmission
coefficients, which will be necessary in Section \ref{section: RH}
when performing the Deift/Zhou steepest descent analysis of the RH
problem. In the asymptotic analysis of the RH problem, we will
construct a $\mathcal G$-function which is slightly modified
compared to the one used in \cite{DVZ, DVZ2}. After the opening of
the lens, the crucial part of Section \ref{section: RH} consists of
the construction of a local parametrix near the critical point
$u_c$. Here we will use a model RH problem associated to the $P_I^2$
equation. Accurate matching of the local parametrix with the
parametrix in the outside region will provide asymptotics for the
solution of the RH problem. In Section \ref{section: proof} finally,
we collect the asymptotic results obtained in the previous section
to prove Theorem \ref{theorem: main}.

\section{Inverse scattering transform}\label{section: inverse scattering}

\subsection{Construction of the RH problem}

In the first part of this section, we recall briefly the construction of the RH problem associated
to the KdV equation, as it was done in e.g.\ \cite{BDT, Shabat}. The RH
problem is constructed by inverse scattering and will be the
starting point of our asymptotic analysis in the next section.
Recall that we consider initial data satisfying Assumptions \ref{assumptions} (a)-(c).

\medskip

The initial value problem for the KdV
equation
\begin{equation}
\label{KdV1} u_t+6uu_x+\epsilon^2 u_{xxx}=0,\quad u(x,0,\e)=u_0(x),
\end{equation}
can be solved by inverse scattering transform \cite{GGKM}.
 Introducing the operators $L$ and $A$, depending on $x$ and also on $t$
through $u=u(x,t)$,
\[
L=\epsilon^2\dfrac{d^2}{dx^2}+u,\quad
A=4\epsilon^2\dfrac{d^3}{dx^3}+3\left(u\dfrac{d}{dx}+\dfrac{d}{dx}u\right),
\]
the KdV equation can be written in the Lax form \cite{Lax}
\begin{equation}
\label{Lax} \dot{L}=[L,A],
\end{equation}
where $\dot{L}=\dfrac{dL}{dt}$ is the operator of multiplication
by $u_t$. The Schr\"odinger equation with potential $u$
\[(\epsilon^2\dfrac{d^2}{dx^2}+u)f=\lambda f,\] can be re-written
 as the first order differential equation
\begin{equation}
\begin{split}
\label{eq1}
&DF=JF+QF,\qquad D=\epsilon\dfrac{d}{dx},\\
&J=\begin{pmatrix}0&1\\\lambda&0\end{pmatrix},\qquad Q\begin{pmatrix}0&0\\-u&0\end{pmatrix},\qquad
F=\begin{pmatrix}f\\g\end{pmatrix}.
\end{split}
\end{equation}
There are two fundamental $2\times 2$ matrix solutions
$\Psi=\Psi(\lambda;x,t,\epsilon)$ and
$\Phi=\Phi(\lambda;x,t,\epsilon)$ to the Schr\"odinger equation
(\ref{eq1}), with
the following normalizations as $x\to \pm \infty$,
\begin{equation}
\label{normalization}
\begin{split}
&\Psi(\lambda;x,t,\epsilon) e^{\frac{i}{\e}(-\lambda)^{1/2}x\sigma_3}\sim\Lambda(\lambda), \qquad \mbox{ as $x\to +\infty$,}\\
&\Phi(\lambda;x,t,\epsilon) e^{\frac{i}{\e}(-\lambda)^{1/2}x\sigma_3}\sim\Lambda(\lambda), \qquad \mbox{ as $x\to -\infty$,}\\
\end{split}
\end{equation}
where
\begin{equation}
\label{Lambda}
 \Lambda(\lambda)=\begin{pmatrix}1&1\\-i(-\lambda)^{1/2}&i(-\lambda)^{1/2}\end{pmatrix},
 \qquad \sigma_3=\begin{pmatrix}1&0\\0&-1\end{pmatrix}.
\end{equation}
Here we take the principal branch of the square root, such that
$(-\lb)^{1/2}$ is analytic in $\mathbb C\setminus [0, +\infty)$
and positive for $\lb<0$.
We denote by  $\Psi_1$, $\Psi_2$, $\Phi_1$, and $\Phi_2$  the columns of the matrices
$\Psi$ and $\Phi$ respectively.
It is known that the vectors $\Psi_1$ and $\Phi_2$ are analytic for $\Im\lambda>0$ and continuous for
$\Im\lambda\geq 0$, and that
$\Psi_2$ and $\Phi_1$ are analytic for $\Im\lambda<0$  and continuous for $\Im\lambda\leq 0$ \cite{Marchenko}.

For analytic initial data satisfying Assumption~\ref{assumptions}(b),  $\Psi(\lambda;x,t,\epsilon)$ and
$\Phi(\lambda;x,t,\epsilon)$ are also analytic for
$(x,\lambda)\in\mathcal{S}\times \Pi_{\theta_0}$ \cite{Fujiie}, where
$\mathcal{S}$ is defined in  Assumption~\ref{assumptions}(b), and
\[
\Pi_{\theta_0}=\{\lambda\in\mathbb{C}\backslash \{0\},\;-2\theta_0+\pi<\arg\lambda<2\theta_0+\pi\}.
\]
Denoting $\Psi(x,\lambda)^*=\overline{\Psi(\bar{x},\bar{\lambda})}$, it follows from the reality of $u_0(x)$ that
\[
\Psi_2=\Psi_1^*,\quad \Phi_2=\Phi_1^*.
\]
Since $\Psi$ and $\Phi$ form two fundamental bases of solutions to the Schr\"odinger equation, they are related to each other by a constant matrix (independent of $x$)
for  $\lambda\in\Pi_{\theta_0}$,
\begin{equation}
\label{transition}
\Psi(\lambda;x,t,\e)=\Phi(\lambda;x,t,\e)\begin{pmatrix}a(\lambda;t,\e)&\bar{b}(\bar{\lambda};t,\e)\\
b(\lb;t,\e)&\bar a(\bar{\lb};t,\e)
\end{pmatrix}.
\end{equation}
The components $a(\lambda;t,\e)$ and $b(\lambda;t,\e)$  (and also $a^*$, $b^*$)
are holomorphic with respect to $\lambda\in\Pi_{\theta_0}$.
Since $\det\Psi\equiv \det\Phi$, we have that
\begin{equation}
\label{normalizationab}
aa^*-bb^*=1,
\end{equation}
and, for real  $\lambda<0$
\[
 |a|^2-|b|^2=1,
\]
which shows that $a\neq 0$ for real $\lambda<0$.
Therefore we can divide by $a$ obtaining
\begin{equation}
\label{transition1} \dfrac{\Psi_{11}}{a}\sim \left\{
\begin{split}
&e^{-\frac{i}{\e}(-\lambda)^{1/2}x}+\dfrac{b}{a}e^{\frac{i}{\e}(-\lambda)^{1/2}x},
& \mbox{ as }x\ra -\infty,\\
&\dfrac{1}{a}e^{-\frac{i}{\e}(-\lambda)^{1/2}x},& \mbox{ as }x\ra +\infty.
\end{split}\right.
\end{equation}
The quantities
\[
r(\lb;t,\e):=\dfrac{b(\lb;t,\e)}{a(\lb;t,\e)},\qquad
t(\lb;t,\e):=\dfrac{1}{a(\lb;t,\e)},
\]
are called reflection and transmission coefficients (from the
left) for the potential $u(x,t,\e)$.  For an arbitrary dependence of
$u(x,t,\e)$ on $t$, it is not possible in general to find the
time-dependence of $a$ and $b$. If however $u(x,t,\e)$ evolves
according to the KdV equation, the Gardner-Greene-Kruskal-Miura
equations \cite{GGKM}
\begin{equation}
\dfrac{da}{dt}=0,\quad
\dfrac{db}{dt}=\frac{8i}{\e}(-\lambda)^{3/2}b,
\end{equation}
hold so that the reflection coefficient evolves according to
\[
r(\lb;t,\e)=r(\lb;0,\e)e^{\frac{8i}{\e}(-\lambda)^{3/2}t}.
\]
In what follows we write $r(\lambda;\e)=r(\lb;0,\e)$ for the
reflection coefficient at time $t=0$.
For $\lambda\rightarrow-\infty$  we have  \cite{Marchenko}
\[
\lim_{\lambda\rightarrow-\infty}r(\lambda;\epsilon)=0.
\]
\begin{remark}
\label{zeros}
For analytic initial data
satisfying Assumptions~\ref{assumptions}, the reflection coefficient
$r(\lb;\e)$ and the transmission coefficient $t(\lambda;\e)$ are meromorphic functions  in the sector $\Pi_{\theta_0}$. The possible poles occur at the zeros of $a(\lambda;\e)$. For initial data satisfying Assumption~\ref{assumptions},
the zeros of $a(\lambda;\e)$ lie in the sector \cite{FujiieRamond}
\[
\pi<\arg\lambda<\pi +2\theta_0.
\]

\end{remark}

\medskip

The time evolution of the scattering data now leads us to the
inverse scattering problem to recover the KdV solution $u(x,t,\e)$
from the reflection coefficient at time $t$.
This brings us to the RH boundary value problem for the initial
value problem of the KdV equation. The $2\times 2$ matrix-valued
function $M$ defined by
\begin{equation}
M(\lb;x,t,\e)=\begin{cases}\begin{array}{ll}\left(\Phi_2(\lb;x,t,\e)e^{-i(-\lb)^{1/2}x}\
\
\dfrac{\Psi_1(\lb;x,t,\e)}{a(\lb;t,\e)}e^{i(-\lb)^{1/2}x}\right),&
\mbox{ as $\lb\in\mathbb C^+$},\\[3ex]
\left(\dfrac{\Psi_2(\lb;x,t,\e)}{a(\lb;t,\e)}e^{-i(-\lb)^{1/2}x}\
\ \Phi_1(\lb;x,t,\e)e^{i(-\lb)^{1/2}x}\right),&\mbox{ as
$\lb\in\mathbb C^-$},
\end{array}
\end{cases}
\end{equation}
where $\Psi_j$ and $\Phi_j$ denote the $j$-th column of $\Psi$ and
$\Phi$, satisfies the following RH conditions \cite{BDT, Shabat}.
\subsubsection*{RH problem for $M$}
\begin{itemize}
\item[(a)] $M(\lb;x,t,\e)$ is analytic for
$\lb\in\mathbb{C}\backslash \mathbb{R}$, \item[(b)] $M$ has continuous boundary values $M_+(\lb)$ and $M_-(\lb)$ when approaching $\lambda\in\mathbb R\setminus\{0\}$ from above and below, and
\begin{align*}&M_+(\lb)=M_-(\lb){\small \begin{pmatrix}1&r(\lb;\e)
e^{2i\alpha(\lambda;x,t)/\e}\\
-\bar r(\lb;\e)e^{-2i\alpha(\lb;x,t)/e}&1-|r(\lb;\e)|^2
\end{pmatrix}},&\mbox{ for $\lb<0$,}\\
&M_+(\lb)=M_-(\lb)\sigma_1,\quad
\sigma_1=\begin{pmatrix}0&1\\1&0\end{pmatrix},&\mbox{ for $\lb>0$},
\end{align*}
\item[(c)] $M(\lb;x,t,\e)\sim
\begin{pmatrix}1&1\\&\\i\sqrt{-\lb}&-i\sqrt{-\lb}\end{pmatrix},\qquad$
for $\lb\rightarrow \infty$.
\end{itemize}
Here $\alpha$ is given by
\[\alpha(\lambda;x,t)=4t(-\lambda)^{3/2}+x(-\lambda)^{1/2}.\]
The solution of the KdV equation can be recovered from the RH
problem by the following formula, see e.g.\ \cite{BDT},
\begin{equation}\label{uM}
u(x,t,\e)=-2i\e\partial_x M_{11}^1(x,t,\e),
\end{equation}
where
$M_{11}(\lb;x,t,\e)=1+\dfrac{M_{11}^1(x,t,\e)}{\sqrt{-\lb}}+\bigO(1/\lb)$
as $\lb\to\infty$.

\subsection{Semiclassical limit of the reflection and transmission coefficients}

In order to study the limit as $\epsilon\rightarrow 0$ of the Riemann-Hilbert problem, it is necessary to obtain semiclassical asymptotics for the reflection coefficient
$r(\lambda;\epsilon)$.

\medskip

Under the validity of Assumptions~\ref{assumptions}, the WKB approximation of
the reflection coefficient as $\e\to 0$ is the following, see \cite{Ramond, Fujiie}.

\begin{itemize}
\item[(i)] For any positive constant $\delta$, we have the following WKB asymptotics
 for $-1+\delta\leq\lambda<0$,
\begin{equation}
\label{WKB1}
\begin{split}
r(\lambda;\e)&= ie^{-\frac{2i}{\e}\rho(\lambda)}(1+\e h_1(\lb;\e)),\\
t(\lambda;\epsilon)&=e^{-\tau(\lambda)/\epsilon}e^{i
(\rho(\lambda)-\widetilde\rho(\lambda))/\epsilon} (1+\e
h_2(\lb;\e)).
\end{split}
\end{equation}
Here $\rho(\lambda)$ and $\widetilde\rho(\lambda)$ are defined by
\begin{equation}\label{def rho}
\rho(\lb)=x_-(\lambda)\sqrt{-\lambda}+\int_{-\infty}^{x_-(\lambda)}[\sqrt{u_0(x)-\lambda}-\sqrt{-\lambda}]dx=
\dfrac{1}{2}\int_{\lambda}^0\dfrac{f_-(u)du}{\sqrt{u-\lambda}},
\end{equation}
\begin{equation}\label{def rho+}
\widetilde\rho(\lb)=x_+(\lambda)\sqrt{-\lambda}-\int^{\infty}_{x_+(\lambda)}[\sqrt{u_0(x)-\lambda}-\sqrt{-\lambda}]dx=\dfrac{1}{2}
\int_{\lambda}^0\dfrac{f_+(u)du}{\sqrt{u-\lambda}},
\end{equation}
where $x_{-}(\lambda)<x_{+}(\lambda)$ are the solutions of the equation
$u_0(x_{\pm}(\lambda))=\lambda$ and $f_{\mp}$ are the inverse function of the decreasing and increasing part of the initial data respectively. The function $\tau$ is defined as
\begin{equation}
\label{tau}
\tau(\lambda)=\int\limits_{x_-(\lambda)}^{x_+(\lambda)}\sqrt{\lambda - u_0(x)}dx=\int_{-1}^{\lambda}\sqrt{\lambda-u}f_+'(u)du+\int_{\lambda}^{-1}\sqrt{\lambda-u}f_-'(u)du.
\end{equation}
The functions $h_1$ and $h_2$ are classical analytic symbols of nonnegative order.\footnote{A function $h(z;\e)$ defined in $U\times (0,\e_0)$ where $U$ is an open set in $\mathbb{C}$, and $\e_0>0$,
 is called a classical analytic symbol  of order $m$ if $h$ is analytic function of $z$ in $U$ and if there is a sequence of analytic functions $a_j(z)$  such that
$h(z;\e)$ admits the series $\sum_{j\geq 0}a_j(z)\e^{j+m}$ as asymptotic expansion as $\e\rightarrow 0$ $\forall z\in U$ and for all compact $K\subset U$ there exists a constant $C>0$ such that
$|a_j(z)|\leq C^{j+1}j^j$  $\forall z\in K$. }

The expansion (\ref{WKB1}) can be extended to  a strip of the complex plane of the form
\begin{equation}\label{definition Pi}
\Pi^+_{\theta_0}=\Pi_{\theta_0}\cap\{-1+\frac{\delta}{2}<\Re(z)<0\}\cap\{0\leq \Im(z)<\tilde{\sigma}\},
\end{equation}
where $0<\tilde{\sigma}<\sigma$, where $\sigma$ has been defined in Assumptions~\ref{assumptions}.
 The constant $\tilde{\sigma}$ has to be chosen sufficiently small
so that  for $\lambda\in \Pi^+_{\theta_0}$  there exist only two simple turning points
$x_{\pm}(\lambda)\in\mathcal{S}$   solving the equation
 $u_0(x_{\pm}(\lambda))= \lambda$.
\item[(ii)] For any fixed  $\delta>0$, there is a constant $c_2>0$ such that for $\lambda\leq -1-\delta$, we have that
\begin{equation}
\label{WKB3}
r(\lambda;\epsilon)=\bigO(e^{-\frac{c_2}{\epsilon}}),\qquad\mbox{ as $\e\to 0$.}
\end{equation}
\item[(iii)] Near $-1$, for sufficiently small $\delta>0$, we have the following asymptotics for $\lambda\in D_+(-1,\delta)=\{\lambda\in\mathbb C:|\lambda+1|<\delta, \;\Im \lambda> 0\}$,
\begin{equation}
\label{WKB2}
r(\lambda;\e)= i \dfrac{ \exp[-2i\rho(\lambda)/\epsilon]}
{N\left(-\dfrac{i\tau(\lambda)}{\pi\epsilon}\right)}(1+O(\e)),\qquad\mbox{ as $\e\to 0$,}
\end{equation}
where
\begin{equation}
\label{N}
N(z)=\dfrac{\sqrt{2\pi}}{\Gamma(1/2+z)}e^{z\log(z/e)},
\end{equation}
and $\Gamma(z)$ is the standard $\Gamma$-function. The functions $\rho(\lambda)$ and $\tau(\lambda)$ are
 the analytic continuations to $D_+(-1,\delta)$
of the corresponding functions defined on the interval $-1\leq \lambda< 0$.
\end{itemize}

\begin{remark}
For complex values of $\lambda$ close to $-1$, the  two simple turning points
$x_{\pm}(\lambda)$ near $x=x_M$ are  $x_{\pm}(\lambda)\approx x_M\pm s\sqrt{\lambda +1}$, with $s^2=2/u_0''(x_M)$. In particular for $\lambda>-1$, these turning points are real, and for $\lambda<-1$,
they are purely imaginary.
Choosing the branch cut of $\rho$ for $\lambda <-1$, we can, for $\delta$ sufficiently small, extend $\rho$ to an analytic function in $D_+(-1,\delta)$.
\end{remark}

\begin{remark}We note that the function
$N(z)$ defined in (\ref{N})
is analytic in $\mathbb C\setminus (-\infty,0]$ and it has poles on the negative axis.
Using properties of the $\Gamma$-function,
one can recover the limits
\begin{align}
&\label{limN1}
\lim_{|z|\rightarrow \infty}N(z)=1,\quad |\arg z|<\pi,\\
&\label{limN2}
\lim_{z\rightarrow 0}N(z)=\sqrt{2}.
\end{align}
The poles of the reflection coefficient $r(\lambda;\e)$ in a neighborhood of $-1$ correspond to
the poles of the function $\Gamma\left(\dfrac{1}{2}-\dfrac{i\tau(\lambda)}{\pi\epsilon}\right)$.
Since $\tau(\lambda)\approx \dfrac{\pi s}{2}(\lambda+1)$ as $\lambda\to -1$, it is clear that the poles of $r(\lambda;\e)$ occur only for $\Im\lambda<0$ (cf.\ Remark \ref{zeros}), although as $\e\to 0$ they are getting closer to $-1$.
\end{remark}

\begin{remark}
We observe that applying the condition (\ref{normalizationab}) we obtain for real $\lambda<0$
\[
|r(\lambda;\e)|^2+|t(\lambda;\e)|^2=1
\]
which shows that
\begin{equation}\label{WKB r}
1-|r(\lambda;\e)|^2=e^{-2\tau(\lambda)}(1+\bigO(\e)),\qquad\mbox{ as
$\e\to 0$,}
\end{equation}
where $\tau$ has been defined in (\ref{tau}). Therefore for
$-1<\lambda<0$, $|r(\lambda;\e)|^2$ is equal to one modulo
exponentially small terms.
\end{remark}
Let us now define
\begin{equation}\label{def kappa1}
\kappa(\lambda;\e)=-r(\lambda;\e)ie^{\frac{2i}{\e}\rho(\lambda)},\qquad\mbox{
as $\lambda\in\mathcal V_+:=\Pi_{\theta_0}^+\cup D_+(-1,\delta)$}.
\end{equation}
where $\rho$ has been defined in (\ref{def rho}).

It then follows that $\kappa(\lambda)\to 1$ uniformly in
$\Pi_{\theta_0}^+$. In $D_+(-1,\delta)$, one obtains  by
(\ref{WKB2}) that
\[
\kappa(\lambda;\e)=\dfrac{1}{N(-\frac{i\tau(\lambda)}{\pi\epsilon})}(1+\bigO(\e)).
\]

For $\lambda\in D_+(-1,\delta)$,
$-i\tau(\lambda)$ stays away from the negative real line, so that by (\ref{limN1}),
$\kappa(\lambda;\e)$ is bounded as $\e\to 0$.

\medskip

Summarizing we can conclude from the above discussion that
\begin{align}
&\label{kappa1}\kappa(\lambda;\e)=1+\bigO(\e), &\mbox{uniformly for $\lambda\in\Pi_{\theta_0}^+$},\\
&\label{kappa3}|\kappa(\lb;\e)|\leq M, &\mbox{as $\lambda\in\mathcal V_+$.}
\end{align}

\begin{Example}
Let us consider the potential $u_0(x)=-1/\mbox{cosh}^2x$.
In this particular case, the Schr\"odinger equation
 \begin{equation}
 \label{cosh}
 \epsilon^2f_{xx}-\frac{1}{\cosh^2 x}f=\lambda f,
 \end{equation}
has a solution of the following form,
\[
f(x;\lambda)=2^{-k}(1-\xi^2)^{k/2}\mbox{\tiny{2}}F\mbox{\!\tiny{1}}(k-s,k+s+1,k+1;\frac{1}{2}(1-\xi)),
\]
where \[\xi=\tanh x, \qquad k=\frac{i}{\e}\sqrt{-\lambda}, \qquad s=-\frac{1}{2}+\frac{i}{\e}\sqrt{1-\frac{\e^2}{4}},\] and
where $\mbox{\tiny{2}}F\mbox{\!\tiny{1}}(a,b,c;z)$ is a hypergeometric function \cite{AS} solving the equation
$z(1-z)w''+[c-(a+b+1)z]w'-abw=0$.
As $x\rightarrow +\infty$, we observe that $\xi\rightarrow 1$, so that
\[
f(x;\lambda,\epsilon)\sim  e^{-\frac{i}{\epsilon}\sqrt{-\lambda}x}\quad x\rightarrow +\infty.
\]
Transforming the hypergeometric function
\begin{align*}
\mbox{\tiny{2}}F\mbox{\!\tiny{1}}(a,b,c;z)&=\dfrac{\Gamma(c)\Gamma(c-a-b)}{\Gamma(c-a)\Gamma(c-b)}\mbox{\tiny{2}}F\mbox{\!\tiny{1}}(a,b,a+b+1-c;1-z)+\\
&\dfrac{\Gamma(c)\Gamma(a+b-c)}{\Gamma(a)\Gamma(b)}(1-z)^{c-a-b}\mbox{\tiny{2}}F\mbox{\!\tiny{1}}(c-a,c-b,c+1-a-b;1-z),
\end{align*}
one obtains  for $x\rightarrow -\infty$ that
\[
{\small f(x;\lambda,\epsilon)\sim\dfrac{\Gamma(k+1)\Gamma(-k)}{\Gamma(s+1)\Gamma(-s)}  e^{\frac{i}{\epsilon}\sqrt{-\lambda}x}+\dfrac{\Gamma(k+1)\Gamma(k)}{\Gamma(k-s)\Gamma(k+s+1)}  e^{-\frac{i}{\epsilon}\sqrt{-\lambda}x}},\ \ \mbox{ as $x\rightarrow -+\infty$}.
\]
Therefore the reflection coefficient is equal to
\begin{equation}
\label{rexample}
r(\lambda;\epsilon)=\dfrac{\Gamma(\frac{i}{\epsilon}\sqrt{-\lambda}-s)\Gamma(\frac{i}{\epsilon}\sqrt{-\lambda}+s+1)\Gamma(-\frac{i}{\epsilon}\sqrt{-\lambda})}{\Gamma(s+1)\Gamma(-s)\Gamma(\frac{i}{\epsilon}\sqrt{-\lambda})}.
\end{equation}
Using well-known asymptotic properties of the $\Gamma$-function \cite{AS}, it is a straightforward calculation to check that (\ref{WKB1}), (\ref{WKB3}), and (\ref{WKB2}) hold for this example.
\end{Example}

\section{Asymptotic analysis of the RH problem near the gradient catastrophe}\label{section: RH}

We will apply the Deift/Zhou steepest descent method on the RH
problem for the KdV equation. The main lines of this method have
been developed in \cite{DVZ, DVZ2}, but the analysis near the time
and point of gradient catastrophe has not been established yet.
The starting point of our analysis is the RH problem for $M$ given
in Section \ref{section: inverse scattering}.

\subsubsection*{RH problem for $M$}
\begin{itemize}
\item[(a)] $M(\lb;x,t,\e)$ is analytic for
$\lb\in\mathbb{C}\backslash \mathbb{R}$, \item[(b)] $M$ satisfies
the following jump conditions,
\begin{align*}&M_+(\lb)=M_-(\lb){\small \begin{pmatrix}1&r(\lb;\e)e^{2i\alpha(\lb;x,t)/\e}\\
-\bar{r}(\lb;\e)
e^{-2i\alpha(\lambda;x,t)/\e}&1-|r(\lb;\e)|^2
\end{pmatrix}},&\mbox{ for $\lb<0$,}\\
&M_+(\lb)=M_-(\lb)\sigma_1,\quad
\sigma_1=\begin{pmatrix}0&1\\1&0\end{pmatrix},&\mbox{ for $\lb>0$},
\end{align*}
\item[(c)] $M(\lb;x,t,\e)\sim
\begin{pmatrix}1&1\\&\\i\sqrt{-\lb}&-i\sqrt{-\lb}\end{pmatrix},\;\;$\quad
for $\lb\rightarrow \infty$.
\end{itemize}

Our goal is to find asymptotics for $M(\lambda;x,t,\e)$ in the small dispersion limit
where $\e\to 0$, while $x, t$ tend at an appropriate rate to
the point and time of gradient catastrophe $x_c,t_c$ for the Hopf
equation $u_t+6uu_x=0.$

\subsection{Construction of the $\mathcal G$-function}

A first crucial issue in the asymptotic analysis is the construction of
an appropriate $\mathcal G$-function. In \cite{DVZ}, a $\mathcal
G$-function has been constructed when taking the small dispersion
limit $\epsilon\to 0$ for fixed $x$ and $t$. However we will,
later on, take a double scaling limit where $x$ and $t$ tend to the
critical values $x_c$ and $t_c$ simultaneously with $\epsilon\to 0$. For this purpose we need to
modify the $\mathcal G$-function in such a way that we will be
able to construct a local parametrix for $\lambda$ near the critical point $u_c$.

\medskip

Let $\mathcal G$ be defined as follows, with $u_c=u(x_c,t_c)$,
\begin{equation}
 \mathcal
G(\lambda;x,t)=\frac{\sqrt{u_c - \lambda}}{\pi}
\int_{u_c}^0\frac{\rho(\eta)-\alpha(\eta;x,t)}{(\eta
-\lambda)\sqrt{\eta-u_c}}d\eta,\label{def G}
\end{equation}
where $\rho$ and $\alpha$ are defined as before by
\begin{align*}&\rho(\lb)=x_-(\lambda)\sqrt{-\lambda}+\int_{-\infty}^{x_-(\lb)}[\sqrt{u_0(x)-\lambda}-\sqrt{-\lambda}]dx,\\
&\alpha(\lb;x,t)=4t(-\lb)^{3/2}+x(-\lb)^{1/2},\end{align*} and
$\sqrt{u_c-\lambda}$ is analytic off $[u_c,+\infty)$, positive for
$\lambda<u_c$.
 Now $\mathcal G$ is
analytic in $\mathbb C\setminus [u_c,+\infty)$ and $\mathcal
G(\lambda)=\bigO(\lambda^{-1/2})$ as $\lambda\to\infty$. One checks
directly that
\begin{eqnarray}\label{g1}
\mathcal{G}_1(x,t):=\lim_{\lambda\to \infty}(-\lambda)^{1/2}\mathcal
G(\lambda;x,t)=\frac{1}{\pi}
\int_{u_c}^0\dfrac{\rho(\eta)-\alpha(\eta;x,t)}{\sqrt{\eta-u_c}}d\eta.
\end{eqnarray}
Since $\rho(\eta)$ is independent of $x$, we obtain
\begin{eqnarray}
\partial_x \mathcal{G}_1(x,t)&=&\frac{-1}{\pi}\int_{u_c}^0\sqrt{\frac{-\eta}{\eta-u_c}}d\eta\nonumber\\
&=&\frac{u_c}{2}.\label{dg1}
\end{eqnarray}

In the following proposition, we collect two
boundary value relations for the $\mathcal G$-function. Together with the asymptotic behavior of $\mathcal G$, we could have considered those as the defining properties for the $\mathcal G$ as well, instead of defining it explicitly by means of (\ref{def G}) and deriving the properties afterwards.

\begin{proposition}\label{prop g}
$\mathcal G$ is analytic in $\mathbb
C\setminus[u_c,+\infty)$, and we have
\begin{align*}
&\mathcal G_+(\lambda)+\mathcal G_-(\lambda)=0,&\mbox{ for $\lambda\in(0,+\infty)$,}\\
&\mathcal G_+(\lambda)+\mathcal G_-(\lb)-2\rho(\lambda)+2\alpha(\lambda)=0,&\mbox{ for
$\lambda\in(u_c,0)$},\\
&
\mathcal G_+(\lambda)-\mathcal G_-(\lambda)=0,&\mbox{ for $\lambda\in(-\infty,u_c)$}.
\end{align*}
\end{proposition}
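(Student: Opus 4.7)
My plan is to derive all three identities directly from the integral representation \eqref{def G}, using the Plemelj--Sokhotski jump formula for the Cauchy-type integral together with the boundary values of the square-root prefactor $\sqrt{u_c-\lambda}$. The branch convention (positive on $(-\infty,u_c)$, analytic off $[u_c,+\infty)$) forces the boundary values on $[u_c,+\infty)$ to be $\sqrt{u_c-\lambda}_\pm=\mp i\sqrt{\lambda-u_c}$. Analyticity of $\mathcal G$ on $\mathbb C\setminus[u_c,+\infty)$ reduces to showing there is no jump on $(-\infty,u_c)$, which is immediate: for $\lambda<u_c$ both factors are separately analytic, since the integrand $(\rho(\eta)-\alpha(\eta))/((\eta-\lambda)\sqrt{\eta-u_c})$ is nonsingular in $\eta$ on $[u_c,0]$ for such $\lambda$. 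This already gives the third identity $\mathcal G_+-\mathcal G_-=0$ on $(-\infty,u_c)$.

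On $(0,+\infty)$ the Cauchy-type integral itself is still analytic — the integration contour $[u_c,0]$ is disjoint from $\lambda>0$ — so only the prefactor contributes a jump, and its opposite-sign boundary values $\mp i\sqrt{\lambda-u_c}$ immediately give $\mathcal G_++\mathcal G_-=0$. On $(u_c,0)$ both factors jump. Writing $h(\eta)=(\rho(\eta)-\alpha(\eta;x,t))/\sqrt{\eta-u_c}$, which is smooth on $(u_c,0]$ with an integrable square-root singularity at $u_c$, Plemelj--Sokhotski gives boundary values of $\tfrac{1}{\pi}\int_{u_c}^0 h(\eta)/(\eta-\lambda)\,d\eta$ equal to $\tfrac{1}{\pi}\,\mathrm{PV}\!\int_{u_c}^0 h(\eta)/(\eta-\lambda)\,d\eta\pm i\,h(\lambda)$. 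Multiplying by $\mp i\sqrt{\lambda-u_c}$ converts the Plemelj correction into the real contribution $h(\lambda)\sqrt{\lambda-u_c}=\rho(\lambda)-\alpha(\lambda)$, identical in both $\mathcal G_+$ and $\mathcal G_-$, while the principal-value parts pick up opposite imaginary prefactors and cancel under summation. Adding therefore yields $\mathcal G_++\mathcal G_-=2(\rho(\lambda)-\alpha(\lambda))$, the middle identity.

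The principal book-keeping obstacle is keeping the branch of $\sqrt{u_c-\lambda}$ and the orientation used in Plemelj--Sokhotski fully consistent: a single sign slip in either convention would flip one of the non-trivial jump relations, or — worse — make the two identities look inconsistent with the analyticity claim. A minor secondary point is the integrable endpoint singularity of $h$ at $\eta=u_c$; it is harmless for the jump identities proved here but will have to be tracked carefully when estimating the local behavior of $\mathcal G$ near $u_c$, which is precisely the place where the $P_I^2$ local parametrix has to be glued in later in the paper.
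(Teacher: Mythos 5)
Your proof is correct and follows essentially the same route as the paper: the first and third identities come from the sign flip (respectively, continuity) of $\sqrt{u_c-\lambda}$ combined with the analyticity of the Cauchy-type integral off $[u_c,0]$, and on $(u_c,0)$ your Plemelj--Sokhotski computation (opposite-sign prefactors killing the principal-value parts and leaving $2(\rho-\alpha)$) is exactly the paper's ``contour integral around $\lambda$ plus residue'' argument written out in full, with the sign conventions correctly tracked. The only negligible imprecision is calling the integrand ``nonsingular'' on $[u_c,0]$ for $\lambda<u_c$ --- it has an integrable $(\eta-u_c)^{-1/2}$ endpoint singularity --- but this does not affect the argument.
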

\begin{proof}
The first statement is immediate since
$\sqrt{u_c-\lambda}_+=-\sqrt{u_c-\lambda}_-$ and since $\lb>0$
does not belong to the interval of integration. For the second
statement, note that as $\lambda\in (u_c,0)$, $\mathcal
G_++\mathcal G_-$ is equal to a contour integral around $\lb$.
A simple residue argument gives us the required result. The last statement is obvious because $\sqrt{u_c-\lambda}$ is branched for $\lambda>u_c$ and $\lb<u_c$
does not belong to the interval of integration.
\end{proof}

\subsection{First transformation of the RH problem}
In this section we perform a first transformation of the RH
problem for $M$. The goal of this transformation is to simplify
the jump matrices for the RH problem, using the properties of the
$\mathcal G$-function in an appropriate way.

\medskip

We define
\begin{equation}\label{def T}
T(\lambda;x,t,\e)=M(\lambda;x,t,\e)e^{-\frac{i}{\epsilon}\mathcal
G(\lambda;x,t)\sigma_3}.
\end{equation}
From the RH conditions for $M$ and the definition of $T$, we can now derive the RH conditions for $T$.

\subsubsection*{RH problem for $T$}
\begin{itemize}
\item[(a)] $T$ is analytic in $\mathbb C\setminus \mathbb R$,
\item[(b)] $T_+(\lambda)=T_-(\lambda)v_T(\lambda)$, \ \ \ as
$\lambda\in\mathbb R$,
\[v_T(\lambda)=e^{\frac{i}{\epsilon}\mathcal
G_-(\lambda)\sigma_3}v_M(\lambda)
e^{-\frac{i}{\epsilon}\mathcal G_+(\lambda)\sigma_3},\]
\item[(c)]
$T(\lambda)\sim\begin{pmatrix}1&1\\
i\sqrt{-\lambda}&-i\sqrt{-\lambda}\end{pmatrix}$ as
$\lambda\to\infty$.
\end{itemize}

Using Proposition \ref{prop g}, one checks that the jump matrix $v_T$
transforms to
\begin{equation}
v_T(\lambda)=\begin{cases}
\begin{array}{ll}
\sigma_1, &\mbox{ as $\lambda>0$,}\\[0.8ex]
\begin{pmatrix}1&r(\lb)e^{\frac{2i}{\e}(\mathcal
G(\lb)+\alpha(\lb))}\\
-\bar r(\lb)e^{-\frac{2i}{\e}(\mathcal G(\lb)+\alpha(\lb))} &1-|r(\lb)|^2
\end{pmatrix},&\mbox{ as $\lambda\in (-\infty,u_c)$,}\\[3ex]
\begin{pmatrix}e^{-\frac{i}{\e}(\mathcal G_+(\lb)-\mathcal G_-(\lb))}
&r(\lb)e^{\frac{2i}{\e}\rho(\lb)}\\
-\bar{r}(\lb)e^{-\frac{2i}{\e}\rho(\lb)}&(1-|r(\lb)|^2)e^{\frac{i}{\e}(\mathcal
G_+(\lb)-\mathcal G_-(\lb))}
\end{pmatrix},&\mbox{ as $\lambda\in (u_c,0)$,}
\end{array}
\end{cases}
\end{equation}
It is important that we can express the KdV solution $u(x,t,\e)$ in terms of the new RH problem for $T$.
Using (\ref{uM}), (\ref{dg1}), and (\ref{def T}), the solution of
the KdV equation can now be obtained from
\begin{eqnarray}
u(x,t,\e)&=&2\partial_x \mathcal{G}_1(x,t) -2i\e\partial_x T_{11}^1(x,t,\e)\nonumber \\
&=&u_c-2i\e\partial_x T_{11}^1(x,t,\e),\label{uT}
\end{eqnarray}
where $T_{11}^1$ is given by
\[T_{11}(\lb;x,t,\e)=1+
\frac{T_{11}^1(x,t,\e)}{\sqrt{-\lb}}+O(\lb^{-1}),\qquad \mbox{ as $\lb\to\infty$}.\]

\medskip

Let us define an auxiliary function $\phi$ in $\mathcal V=\mathcal V_+\cup\mathcal V_-$,
 where $\mathcal V_+$ is the region defined in (\ref{def kappa1}) and (\ref{definition Pi}), and $\mathcal V_-=\overline{\mathcal V_+}$, by
\begin{align}
&\label{definition phi}\phi(\lb;x,t)=\mathcal G(\lb;x,t)-\rho(\lb)+\alpha(\lb;x,t),&\mbox{ for $\lb\in\mathcal V_+$,}\\
&\label{definition phi2}\phi(\lb;x,t)=\phi^*(\lb;x,t)=\overline{\phi(\bar\lambda;x,t)},&\mbox{ for $\lb\in\mathcal V_-$.}
\end{align}
Note that because of reality, $\phi$ is analytic across $(-1,u_c)$,
but not on $(-1-\delta,-1)$, because this is a part of the branch
cut for $\rho$, as we discussed in Section \ref{section: inverse
scattering}. On $(u_c,0)$, $\phi$ is not analytic because of the
branch cut for $\mathcal G$. We obtain from Proposition \ref{prop g}
that the following equation holds,
\begin{equation}2\phi_{+}(\lb;x,t)=\mathcal G_+(\lb;x,t)-\mathcal
G_-(\lb;x,t),\qquad \text {for $\lb\in (u_c,0)$}.\label{def phi 2}\end{equation}
\begin{lemma}
The function $\phi$ defined in (\ref{definition
phi})-(\ref{definition phi2}) takes the form
\begin{equation}
\label{phi}
\phi(\lambda;x,t)=\sqrt{u_c-\lambda}(x-x_c-6u_c(t-t_c))+\int_{\lambda}^{u_c}(f_-'(\xi)+6t)\sqrt{\xi-\lambda}d\xi.
\end{equation}
\end{lemma}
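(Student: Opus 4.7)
The plan is to compute $\mathcal G(\lambda;x,t)$ explicitly from (\ref{def G}) and then assemble $\phi=\mathcal G-\rho+\alpha$ via (\ref{definition phi}). I split the numerator $\rho(\eta)-\alpha(\eta;x,t)$ of the integrand of $\mathcal G$ and evaluate the two pieces separately.

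For the $\rho$-piece I would substitute the representation $\rho(\eta)=\tfrac12\int_\eta^0 f_-(u)(u-\eta)^{-1/2}du$ from (\ref{def rho}), interchange the order of integration, and evaluate the resulting inner integral by the elementary identity
\[
\int_{u_c}^u\frac{d\eta}{(\eta-\lambda)\sqrt{(\eta-u_c)(u-\eta)}}=\frac{\pi}{\sqrt{(u_c-\lambda)(u-\lambda)}}.
\]
The $\rho$-contribution to $\mathcal G$ then collapses to $\tfrac12\int_{u_c}^0 f_-(u)(u-\lambda)^{-1/2}du$; subtracting $\rho(\lambda)$ and integrating by parts via $du/\sqrt{u-\lambda}=2\,d\sqrt{u-\lambda}$ yields
\[
\mathcal G_\rho(\lambda)-\rho(\lambda)=-f_-(u_c)\sqrt{u_c-\lambda}+\int_\lambda^{u_c}f_-'(u)\sqrt{u-\lambda}\,du.
\]

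For the $\alpha$-piece, since $\alpha(\eta;x,t)=x(-\eta)^{1/2}+4t(-\eta)^{3/2}$, it suffices to evaluate $I_k(\lambda)=\int_{u_c}^0(-\eta)^{k+1/2}d\eta/[(\eta-\lambda)\sqrt{\eta-u_c}]$ for $k=0,1$. I introduce $h_k(\zeta):=(-\zeta)^{k+1/2}/\sqrt{u_c-\zeta}$: the principal-branch cuts of the numerator (on $[0,\infty)$) and of the denominator (on $[u_c,\infty)$) cancel on their overlap $(0,\infty)$, so $h_k$ is analytic on $\mathbb C\setminus[u_c,0]$, with $h_0\to 1$ and $h_1(\zeta)=-\zeta-u_c/2+O(\zeta^{-1})$ at infinity, and jump $h_{k,+}-h_{k,-}=2i(-\eta)^{k+1/2}/\sqrt{\eta-u_c}$ across $(u_c,0)$. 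Integrating $h_k(\zeta)/(\zeta-\lambda)$ around a large circle and contracting onto the pole at $\zeta=\lambda$ and the cut $[u_c,0]$ gives $I_0$ and $I_1$ in closed form; reassembling the $x$- and $4t$-weighted pieces produces
\[
\alpha(\lambda;x,t)-\mathcal G_\alpha(\lambda;x,t)=(x-4t\lambda-2tu_c)\sqrt{u_c-\lambda}.
\]

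Adding the two contributions gives $\phi=(x-f_-(u_c)-4t\lambda-2tu_c)\sqrt{u_c-\lambda}+\int_\lambda^{u_c}f_-'(u)\sqrt{u-\lambda}\,du$. The catastrophe relations (\ref{xc1}) give $f_-(u_c)=x_c-6t_cu_c$; writing $-4t\lambda\sqrt{u_c-\lambda}=4t(u_c-\lambda)^{3/2}-4tu_c\sqrt{u_c-\lambda}$ and absorbing the cubic piece via $(u_c-\lambda)^{3/2}=\tfrac32\int_\lambda^{u_c}\sqrt{u-\lambda}\,du$ into a $6t\int_\lambda^{u_c}\sqrt{u-\lambda}\,du$ term, the surviving coefficient of $\sqrt{u_c-\lambda}$ collapses to $x-x_c-6u_c(t-t_c)$, matching (\ref{phi}). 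The main obstacle is the residue computation of $I_k$: one must carefully verify the cancellation of the branch cuts on $(0,\infty)$ and, when contracting the large circle onto $[u_c,0]$, choose the correct contour orientation, since a sign error there produces the negatives of the true $I_k$; checking at some real $\lambda<u_c$ against a direct numerical integration is a useful cross-check.
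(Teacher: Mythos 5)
Your proposal is correct and follows essentially the same route as the paper's proof: both evaluate the $x$- and $t$-terms of $\mathcal G$ by the residue theorem (your explicit contour-deformation of $h_k(\zeta)/(\zeta-\lambda)$ is just this computation written out), handle the $\rho$-piece by exchanging the order of integration and collapsing the inner integral to $\pi/\sqrt{(u_c-\lambda)(u-\lambda)}$, and finish with the same integration by parts and the catastrophe relation $x_c=6u_ct_c+f_-(u_c)$. Your intermediate identities, e.g.\ $\alpha-\mathcal G_\alpha=(x-4t\lambda-2tu_c)\sqrt{u_c-\lambda}$, agree with the paper's $\sqrt{u_c-\lambda}(x-6u_ct)+4t(u_c-\lambda)^{3/2}$, so the argument checks out.
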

\begin{proof}
Using the definition (\ref{def G}) of $\mathcal G$ and (\ref{def rho}), it follows that
\[
\mathcal{G}(\lambda)=\frac{\sqrt{u_c - \lambda}}{\pi}
\int_{u_c}^0\frac{ \frac{1}{2}\int_{\eta}^0\frac{f_-(\xi)d\xi}{\sqrt{\xi-\eta}}-4t(-\eta)^{3/2}-x(-\eta)^{1/2}}{(\eta
-\lambda)\sqrt{\eta-u_c}}d\eta.
\]
For $\lambda\notin(u_c,0)$ one uses the residue theorem to integrate
the terms in $x$ and $t$ while for the double integral we exchange
the order of integration obtaining
\begin{equation*}
\begin{split}
\mathcal{G}(\lambda)&=\sqrt{u_c-\lambda}(x-6u_ct)+4t(u_c-\lambda)^{3/2}-\alpha(\lambda;x,t)\\
&+\frac{\sqrt{u_c - \lambda}}{2\pi}
\int_{u_c}^0f_-(\xi)\int_{u_c}^{\xi}\dfrac{
d\eta}{\sqrt{\xi-\eta}(\eta -\lambda)\sqrt{\eta-u_c}}d\xi.
\end{split}
\end{equation*}
Applying the residue theorem to the last integral one obtains
\begin{equation*}
\begin{split}
\mathcal{G}(\lambda)&=\sqrt{u_c-\lambda}(x-6u_ct)+4t(u_c-\lambda)^{3/2}-\alpha(\lambda;x,t)\\
&+\frac{1}{2}\int_{\lambda}^0\dfrac{f_-(\xi)d\xi}{\sqrt{\xi-\lambda}}++\frac{1}{2}\int^{\lambda}_{u_c}
\dfrac{f_-(\xi)d\xi}{\sqrt{\xi-\lambda}}.
\end{split}
\end{equation*}
Integrating by parts the last term of the l.h.s.\ of the above relation
 and using the identity $x_c=6u_ct_c+f_-(u_c)$, one arrives at the statement.
 In the case   $\lambda\in (u_c,0)$ one has to evaluate the principal value of the integral defining $\mathcal{G}$
in the same way as done above.
\end{proof}
The introduction of $\phi$ enables us to rewrite the jump matrix $v_T$ in a more convenient form.

\medskip

Indeed we have that
\begin{equation}\label{vT2}
v_T(\lambda)=\begin{cases}
\begin{array}{ll}
\sigma_1, &\mbox{ as $\lambda>0$,}\\[0.8ex]
{\small \begin{pmatrix}1&r(\lb)e^{\frac{2i}{\e}(\mathcal
G(\lb)+\alpha(\lb))}\\
-\bar r(\lb)e^{-\frac{2i}{\e}(\mathcal G(\lb)+\alpha(\lb))} &1-|r(\lb)|^2
\end{pmatrix}},&\mbox{ as $\lambda<-1-\delta$,}\\[3ex]
\begin{pmatrix}1&i\kappa_+(\lb)e^{\frac{2i}{\e}\phi_+(\lb)}\\
i\kappa_-^*(\lb)e^{-\frac{2i}{\e}\phi_-(\lb)} &1-|r(\lb)|^2
\end{pmatrix},&\mbox{ as $\lambda\in (-1-\delta,u_c)$,}\\[3ex]
\begin{pmatrix}e^{-\frac{2i}{\e}\phi_+(\lb)}
&i\kappa(\lb)\\
i\bar\kappa(\lb)&(1-|r(\lb)|^2)e^{\frac{2i}{\e}\phi_+(\lb)}
\end{pmatrix},&\mbox{ as $\lambda\in (u_c,0)$.}
\end{array}
\end{cases}
\end{equation}
As before we have written $\kappa$ for
\begin{equation}\label{def kappa}
\kappa(\lambda;\e)=-ir(\lb;\e)e^{\frac{2i}{\e}\rho(\lb)},\qquad \mbox{ as $\lambda\in\mathcal V_+$},
\end{equation}
with boundary values on $\mathbb R$ denoted by $\kappa_+$.
We have written $\kappa_-^*(\lb)=\bar\kappa_+(\lambda)$.
In particular near $u_c$, where our main focus is, the introduction of $\phi$
will turn out to be convenient.

\medskip

In the following proposition, we discuss the behavior of $\phi$ in different regions of $\mathcal V$.

\begin{proposition}\label{prop phi}
For sufficiently small $\delta>0$, there exists $\delta_1>0$ and there exists a neighborhood $\mathcal W$ of $(-1-\delta,u_c-\delta)$ such that
for
$|x-x_c|<\delta_1$, $|t-t_c|<\delta_1$, the following holds,
\begin{equation}
\label{phiinequalities}
\begin{array}{ll}
\Im\phi(\lb;x,t)>0,&\mbox{ as $\lb\in\mathcal W\cap\{z\in\mathbb C:\Im z>0\}$,}\\[1.5ex]
\Im\phi(\lb;x,t)<0,&\mbox{ as $\lb\in\mathcal W\cap\{z\in\mathbb C:\Im z<-0\}$,}\\[1.5ex]
\Im\phi_+(\lb;x,t)<0,&\mbox{ as $\lb\in [u_c+\delta ,0]$,}\\[1.5ex]
-\tau(\lambda)+i\phi_+(\lb;x,t)<0,&\mbox{ as $\lb\in [u_c+\delta,0]$.}
\end{array}
\end{equation}
\end{proposition}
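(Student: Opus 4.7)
The plan is to establish all four strict inequalities first at the critical values $(x,t)=(x_c,t_c)$, where the formula (\ref{phi}) simplifies greatly, and then to propagate them to nearby $(x,t)$ by continuity, choosing the neighborhood $\mathcal W$ thin and $\delta_1$ small accordingly. Setting $x=x_c$ and $t=t_c$ and using (\ref{xc1}), formula (\ref{phi}) collapses to
\[
\phi(\lambda;x_c,t_c)=\int_\lambda^{u_c}\bigl(f_-'(\xi)-f_-'(u_c)\bigr)\sqrt{\xi-\lambda}\,d\xi.
\]
By (\ref{tc1}), $f_-'$ attains its global maximum on $(-1,0)$ at $u_c$, and by $f_-''(u_c)=0$ together with the genericity assumption (\ref{genericity}) this maximum is strict, so $f_-'(\xi)-f_-'(u_c)<0$ for $\xi\in(-1,0)\setminus\{u_c\}$.

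For the two inequalities on $\mathcal W$, the simplified formula is real on the sub-interval $(-1,u_c-\delta)$, so $\Im\phi(\,\cdot\,;x_c,t_c)$ vanishes there. Differentiating under the integral yields
\[
\phi'(\lambda;x_c,t_c)=-\tfrac12\int_\lambda^{u_c}\frac{f_-'(\xi)-f_-'(u_c)}{\sqrt{\xi-\lambda}}\,d\xi>0,
\]
since the integrand is strictly negative. The Cauchy-Riemann equations then give $\partial_y\Im\phi=\Re\phi'>0$, which fixes the sign of $\Im\phi$ in a thin strip around this sub-interval. The remaining sub-interval $(-1-\delta,-1)$ lies on the branch cut of $\rho$, so I would use the analytic continuations of $\rho$ to the half-discs $D_\pm(-1,\delta)$ from Section~\ref{section: inverse scattering}, together with the Schwarz reflection $\phi(\bar\lambda)=\overline{\phi(\lambda)}$, to define the appropriate boundary values of $\phi$ separately in each half of $\mathcal W$ and repeat the Cauchy-Riemann argument there.

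For the two inequalities on $[u_c+\delta,0]$, the branch convention $\sqrt{u_c-\lambda}_+=-i\sqrt{\lambda-u_c}$ together with a contour deformation of the $\xi$-integral through the upper half plane produces
\[
\phi_+(\lambda;x_c,t_c)=i\int_{u_c}^{\lambda}\bigl(f_-'(\xi)-f_-'(u_c)\bigr)\sqrt{\lambda-\xi}\,d\xi,
\]
which is purely imaginary with strictly negative imaginary part, giving the third inequality. For the fourth, (\ref{tau}) may be rewritten as $\tau(\lambda)=\int_{-1}^{\lambda}(f_+'(u)-f_-'(u))\sqrt{\lambda-u}\,du$; splitting this integral at $u_c$ and combining with the above yields
\[
-\tau(\lambda)+i\phi_+(\lambda;x_c,t_c)=-\int_{-1}^{u_c}\bigl(f_+'-f_-'\bigr)(u)\sqrt{\lambda-u}\,du-\int_{u_c}^{\lambda}\bigl(f_+'(u)-f_-'(u_c)\bigr)\sqrt{\lambda-u}\,du,
\]
which is manifestly strictly negative: the first integrand is positive because $f_+'>0>f_-'$ on $(-1,0)$, and the second because $f_+'(u)>0$ and $f_-'(u_c)<0$. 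Continuity of these explicit expressions in $(x,t)$ on the compact interval $[u_c+\delta,0]$ then extends the inequalities to all $(x,t)$ close enough to $(x_c,t_c)$.

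The hardest step will be the sub-interval $(-1-\delta,-1)$, where $f_-$ itself is not defined and (\ref{phi}) must be interpreted through genuine analytic continuation across the branch point of $\rho$ at $-1$. This is exactly the content of Assumption~\ref{assumptions}(b), the WKB expansion (\ref{WKB2}), and the turning-point analysis providing analyticity of $\rho$ on $D_\pm(-1,\delta)$ recalled in Section~\ref{section: inverse scattering}; once these are in hand, the sign analysis carries over unchanged.
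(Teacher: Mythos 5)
Your overall strategy is the same as the paper's: reduce everything to the critical values $(x_c,t_c)$, where $6t_c=-f_-'(u_c)$ makes the sign of the integrand in (\ref{phi}) transparent, and then propagate by continuity in $(x,t)$. Your computation of $\phi'(\lb;x_c,t_c)>0$ on $[-1,u_c-\delta]$ followed by Cauchy--Riemann is exactly the paper's argument for the first two inequalities on the part of $\mathcal W$ over $(-1,u_c-\delta)$; your identity $\phi_+(\lb;x_c,t_c)=i\int_{u_c}^{\lb}(f_-'(\xi)-f_-'(u_c))\sqrt{\lb-\xi}\,d\xi$ gives the third inequality exactly as in the paper; and for the fourth your regrouping of $-\tau+i\phi_+$ into two manifestly negative integrals is correct and in fact slightly tidier than the paper's version, which keeps the four terms $\sqrt{\lb-u_c}(x-x_c-6u_c(t-t_c))-4t(\lb-u_c)^{3/2}+\int_{-1}^{u_c}\sqrt{\lb-\xi}f_-'+\,\cdots$ separate and checks each sign individually.

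The one genuine gap is the subinterval $(-1-\delta,-1)$, which you correctly flag as the delicate step but do not resolve; moreover your closing claim that once the analytic continuation of $\rho$ is in hand ``the sign analysis carries over unchanged'' is not accurate. On $(-1-\delta,-1)$ the boundary values $\phi_\pm$ are no longer real, so the Cauchy--Riemann step (which starts from $\Im\phi=0$ on the axis and uses $\partial_y\Im\phi=\Re\phi'>0$) only yields $\Im\phi>\Im\phi_+$ just above the axis; one must separately verify that $\Im\phi_+\geq 0$ and $\Im\phi_-\leq 0$ already on the segment itself, and this sign is not automatic. The paper supplies the missing computation: near $\lb=-1$ one has $f_-(\lb)=x_M-\sqrt{2(\lb+1)/u_0''(x_M)}\,(1+\bigO(\lb+1))$ with $u_0''(x_M)>0$ by Assumptions \ref{assumptions}(c), so crossing the branch point the boundary value of $f_-'$ from above acquires a positive imaginary part, whence $\Im\phi_+>0$ on $[-1-\delta,-1)$ directly from (\ref{phi}), and symmetrically $\Im\phi_-<0$. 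Without this explicit sign check the first two inequalities remain unproved on the portion of $\mathcal W$ lying over $(-1-\delta,-1)$, which is precisely the region needed to make the lens opening start to the left of $-1$.
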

\begin{proof}
Taking the derivative of (\ref{phi}) one obtains
\begin{equation}
\label{phiprime}
\phi'(\lambda;x,t)=-\dfrac{1}{2\sqrt{u_c-\lambda}}(x-x_c-6u_c(t-t_c))-\dfrac{1}{2}
\int_\lambda^{u_c}\dfrac{f_-'(\xi)+6t}{\sqrt{\xi-\lambda}}d\xi.
\end{equation}
Since $f_-'(\xi)+6t_c<0$ for $\xi\in (-1,0)$, it follows that
\[
\phi'(\lambda;x_c,t_c)>0,\qquad\mbox{ for $\lambda\in[-1,u_c-\delta]$.}
\]
By continuity there exists $\delta_1>0$ such that for $|x-x_c|<\delta_1,\;\;|t-t_c|<\delta_1$,
\begin{equation}
\phi'(\lambda;x,t)>0,\qquad \mbox{ as }\lambda \in [-1,u_c-\delta].
\end{equation}
From the above inequality, by the
Cauchy-Riemann relations, it follows that
\begin{align}&\Im\phi(\lb;x,t)>0, &\mbox{ for $\lb\in\mathcal W\cap\{\lb\in\mathbb C:\Re \lb>-1, \Im \lb>0\}$},\\
&\Im\phi(\lb;x,t)<0,&\mbox{ for
$\lb\in\mathcal W\cap\{\lb\in\mathbb C:\Re \lb>-1, \Im \lb<0\}$,}
\end{align}
 at least if the neighborhood $\mathcal W$ is chosen sufficiently small.

\medskip

In the case $\lambda<-1$ close to $-1$, we need to be a little bit more careful because of the branch cut for $\phi$.
The function $f_-$, the inverse of the decreasing part of the initial data, now assumes the form
$$f_-(\lambda)=x_M-\dfrac{\sqrt{\lambda +1}}{\sqrt{u_0''(x_M)/2}}(1+\bigO(\lambda+1)),\qquad\mbox{ as $\lb \to -1$}.$$
Since the function $\sqrt{\lambda +1}$ is analytic
in $\mathbb{C}\backslash (-\infty, -1]$ and positive for $\lambda>-1$, it follows
 from (\ref{phi}) that
 \begin{align}&\Im\phi_+(\lambda;x,t)>0,&\mbox{ for $\lambda\in[-1-\delta,-1)$,}\\
&\Im\phi_-(\lambda;x,t)<0,&\mbox{ for $\lambda\in[-1-\delta,-1)$,}
\end{align}
Since $\Re\phi_{\pm}'(\lambda;x,t)>0$ for $\lambda\in[-1-\delta,-1]$, by the Cauchy-Riemann relations it again follows that the first and second inequalities in (\ref{phiinequalities}) hold true.

\medskip

Regarding  the third inequality in  (\ref{phiinequalities})  we have for
$\lb\in
[u_c+\delta,0]$ that
\[
i\phi_+(\lambda;x,t)=\sqrt{\lambda -
u_c}(x-x_c-6u_c(t-t_c))-\int_{u_c}^{\lambda}
(f'_-(\xi)+6t)\sqrt{\lambda-\xi}d\xi.
\]
This means that $i\phi_+(\lambda;x_c,t_c)>0$ for $\lambda\in [u_c+\delta,0]$. When $|x-x_c|<\delta_1,\;\;|t-t_c|<\delta_1$ with $\delta_1>0$ sufficiently small, it follows that
$i\phi_+(\lambda;x,t)>0$ for  $\lambda\in [u_c+\delta,0]$ .

\medskip

For proving the last inequality of Proposition~\ref{prop phi}, with $\tau$ defined as in (\ref{tau}), it is straightforward to verify that for $\lb\in
[u_c+\delta,0]$,
\begin{align*}
-\tau(\lambda)+i\phi_+(\lb;x,t)=&\sqrt{\lambda-u_c}[x-x_c-6u_c(t-t_c)]
-4t(\lambda-u_c)^{\frac{3}{2}}\\
&+\int_{-1}^{u_c}\sqrt{\lambda-\xi}f'_-(\xi)d\xi
-\int_{-1}^{\lambda}\sqrt{\lambda-\xi}f'_+(\xi)d\xi,
\end{align*}
where  $f_+$ is the inverse function of the increasing part of the initial data $u_0(x)$.
Since \[-\tau(\lambda;x_c,t_c)+i\phi_+(\lb;x_c,t_c)<0,\qquad\mbox{ as $\lb\in [u_c+\delta,0]$,}\]
there exists $\delta_1>0$  sufficiently small, such that
$-\tau(\lambda;x,t)+i\phi_+(\lb;x,t)<0$ when  $|x-x_c|<\delta_1$ and
$|t-t_c|<\delta_1$.
\end{proof}

As $\e\to 0$, the above proposition shows us exponential decay of
the diagonal entries of $v_T$ on $[u_c+\delta,0]$, and oscillating
behavior of the off-diagonal entries of $v_T$ on $(-1-\delta,u_c)$.
Performing a next transformation will enable us to deform the
oscillatory entries to exponentially decaying entries as well.

\subsection{Opening of the lens}

We are able to factorize the jump matrix in the following way for $-1-\delta<\lambda<u_c$,
\begin{equation}
v_T(\lb)=\begin{pmatrix}1&0\\
i{\kappa}^*_-(\lb)e^{-\frac{2i}{\e}
\phi_-(\lb)}&1
\end{pmatrix}\begin{pmatrix}1&i\kappa_+(\lb)e^{\frac{2i}{\e}\phi_+(\lb)}\\
0&1
\end{pmatrix}.
\end{equation}
Furthermore the first factor can be extended analytically to the lower half plane and the second factor can be extended to the upper half plane. This observation enables us to move the jump contour, which coincided with the real line so far, into the complex plane. The spirit of the Deift/Zhou steepest descent method is that, deforming the contours, one can deform oscillatory jump matrices (on the real line) to exponentially decaying jump matrices (in the complex plane).

\medskip

We will use the factorization
of the jump matrix to open lenses along the interval
$(-1-\delta,u_c)$ for some sufficiently small but fixed
$\delta>0$. It is not necessary to open lenses elsewhere on the real line, because there the jump matrices will turn out to be exponentially small already without deforming the contour. However it is necessary to open the lens starting from some point slightly to the left of $-1$. Opening the lens exactly at $-1$ would not lead to exponentially small jump matrices near $-1$.
Let us consider a lens-shaped region as shown in
Figure \ref{figure: sigmaS}. We retain the freedom to specify the
precise choice of the lens later on, bur for now we assume that $\Sigma_1$ is, except near $u_c$, contained in $\mathcal W\cap\{z\in\mathbb C:\Im z>0\}$, and that $\Sigma_2$ is, except near $u_c$, contained in $\mathcal W\cap\{z\in\mathbb C:\Im z<0\}$, where $\mathcal W$ is a region for which Proposition \ref{prop phi} is valid.

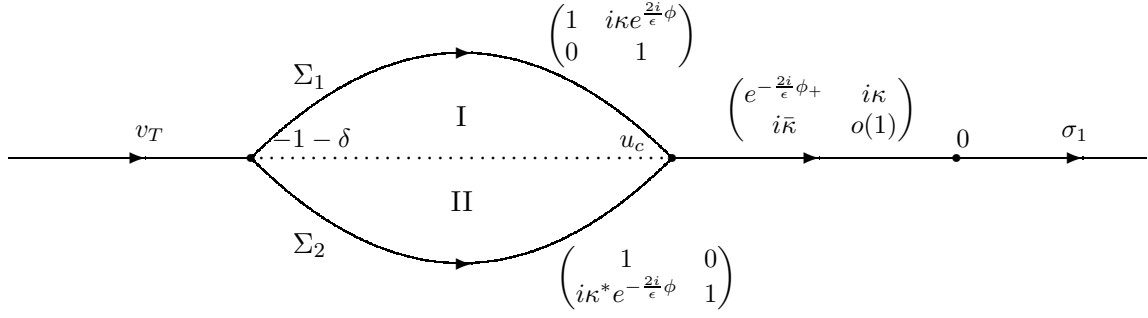
\begin{figure}[t]
\begin{center}
    \setlength{\unitlength}{1.4mm}
    \begin{picture}(137.5,26)(22,11.5)
        \put(112,25){\thicklines\circle*{.8}}
        \put(45,25){\thicklines\circle*{.8}}
        \put(47,26){\small $-1-\delta$}
        \put(112,26){\small $0$}\put(122,27){\small $\sigma_1$}
        \put(85,25){\thicklines\circle*{.8}} \put(80,26){\small $u_c$}
        \put(99,25){\thicklines\vector(1,0){.0001}}
        \put(85,25){\line(1,0){45}}
        \put(124,25){\thicklines\vector(1,0){.0001}}
        \put(22,25){\line(1,0){23}}
        \put(35,25){\thicklines\vector(1,0){.0001}}
\put(74,13){\small
$\begin{pmatrix}1&0\\i\kappa^* e^{-\frac{2i}{\epsilon}\phi}&1\end{pmatrix}$}
        \qbezier(45,25)(65,45)(85,25) \put(66,35){\thicklines\vector(1,0){.0001}}
        \qbezier(45,25)(65,5)(85,25) \put(66,15){\thicklines\vector(1,0){.0001}}
\put(73,36){\small
$\begin{pmatrix}1&i\kappa e^{\frac{2i}{\epsilon}\phi}\\0&1\end{pmatrix}$}
\put(34,27){\small $v_T$} \put(90,29){\small $\begin{pmatrix}
e^{-\frac{2i}{\epsilon}\phi_+}&i\kappa\\i\bar\kappa& o(1)
\end{pmatrix}$}
\put(64.5,28){I} \put(64,20){II} \put(49,32){$\Sigma_1$}
\put(49,16){$\Sigma_2$} \multiput(45,25)(1,0){40}{\circle*{0.1}}
    \end{picture}
    \caption{The jump contour $\Sigma_S$ and the jumps for $S$}
    \label{figure: sigmaS}
\end{center}
\end{figure}

\medskip

Define $S$ as follows,
\begin{equation}
S(\lambda)=\begin{cases}\begin{array}{ll}
T(\lambda)\begin{pmatrix}1&-i\kappa(\lb)e^{\frac{2i}{\e}\phi(\lb)}\\
0&1
\end{pmatrix},&\mbox{ in region I,}\\[3ex]
T(\lambda)\begin{pmatrix}1&0\\
i\kappa^*(\lb)e^{-\frac{2i}{\e}\phi(\lb)}&1
\end{pmatrix},&\mbox{ in region II},\\[3ex]
 T(\lambda), &\mbox{ elsewhere},
\end{array}
\end{cases}
\end{equation}
with $\kappa^*(\lb)=\bar{\kappa}(\bar{\lambda})$.

Now the RH problem for $S$ takes the following form.
\subsubsection*{RH problem for $S$}
\begin{itemize}
\item[(a)] $S$ is analytic in $\mathbb C\setminus \Sigma_S$,
\item[(b)] $S_+(\lambda)=S_-(\lambda)v_S$ for $\lambda\in\Sigma_S$,
with
\begin{equation}\label{vS}
v_S(\lambda)=\begin{cases}
\begin{array}{lr}
\begin{pmatrix}1& i\kappa(\lb)e^{\frac{2i}{\e}\phi(\lb)}\\
0&1
\end{pmatrix},&\mbox{ on $\Sigma_1$},\\[3ex]
 \begin{pmatrix}1&0\\
i\kappa^*(\lb)e^{-\frac{2i}{\e}\phi(\lb)}&1
\end{pmatrix},&\mbox{ on $\Sigma_2$,}\\[3ex]
\begin{pmatrix}e^{-\frac{2i}{\e}\phi_+(\lb)}
&i\kappa(\lb)\\
i\bar\kappa(\lb)&(1-|r(\lb)|^2)e^{\frac{2i}{\e}\phi_+(\lb)}
\end{pmatrix},&\mbox{ as $\lambda\in (u_c,0)$,}\\[3ex]
v_T(\lambda),&\mbox{\hspace{-3cm} as
$\lambda\in(-\infty,-1-\delta)\cup(0,+\infty)$.}
\end{array}
\end{cases}
\end{equation}
\item[(c)] $S(\lambda)\sim \begin{pmatrix}1&1\\
i\sqrt{-\lambda}&-i\sqrt{-\lambda}\end{pmatrix}$ as
$\lambda\to\infty$.
\end{itemize}
Since $S(\lb)=T(\lb)$ for large $\lb$, formula (\ref{uT})
remains valid for $S$, so that we can retrieve the KdV solution by
\begin{equation}
u(x,t,\e)=u_c-2i\e\partial_x S_{11}^1(x,t,\e),\label{uS}
\end{equation}
where
\begin{equation}S_{11}(\lb;x,t,\e)=1+
\frac{S_{11}^1(x,t,\e)}{\sqrt{-\lb}}+O(\lb^{-1}),\qquad \mbox{ as
$\lb\to\infty$}.\end{equation}

In the following proposition, we show that the jump matrix $v_S$
converges to a constant matrix uniformly fast as $\e\to 0$, except
for in a neighborhood of $u_c$. We should note that this is
only true because of the semiclassical limit of the reflection coefficient described in Section \ref{section: inverse scattering}, so indirectly this relies on Assumptions \ref{assumptions}.

\begin{proposition}\label{prop vs}
We can choose a suitable contour $\Sigma_S$ such that for any neighborhood $\mathcal U$ of $u_c$,
\begin{equation}
v_S(\lb)=v^{(\infty)}(\lb)(I+\bigO(\e)), \qquad\mbox{as
$\e\to 0$,}
\end{equation}
uniformly for $\lb\in\Sigma_S\setminus \mathcal U$, with
$v^{(\infty)}$ defined by
\begin{equation}
v^{(\infty)}(\lb)=\begin{cases}\begin{array}{ll}
\sigma_1,&\mbox{for $\lb>0$,}\\
i\sigma_1,&\mbox{for $\lb\in(u_c,0)$,}\\ I,&\mbox{elsewhere.}
\end{array}
\end{cases}
\end{equation}
\end{proposition}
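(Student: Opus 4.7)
The proposition breaks naturally into four cases, one for each constituent piece of $\Sigma_S$, and the plan is to treat them separately while first fixing the freedom in the choice of lens.

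First I would specify the contour. I require that $\Sigma_1$ and $\Sigma_2$ lie in the region $\mathcal V_+$ (resp.\ $\mathcal V_-=\overline{\mathcal V_+}$) where the function $\kappa$ has been extended analytically in (\ref{def kappa1}), and, except in an arbitrarily small neighborhood $\mathcal U$ of $u_c$, that they lie in the set $\mathcal W\cap\{\Im\lambda>0\}$ (resp.\ $\mathcal W\cap\{\Im\lambda<0\}$) on which Proposition \ref{prop phi} provides the inequalities for $\Im\phi$. This is possible because, away from $u_c$, the region $\mathcal W$ can be chosen inside $\mathcal V_+\cup\mathcal V_-$ by shrinking $\sigma$ and $\tilde\sigma$ in Assumption \ref{assumptions}(b) and in (\ref{definition Pi}).

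On $(0,+\infty)$ the jump $v_S=\sigma_1=v^{(\infty)}$ by definition, so there is nothing to prove. On $(-\infty,-1-\delta)$, I invoke the WKB estimate (\ref{WKB3}): since $\mathcal G(\lambda)$ and $\alpha(\lambda;x,t)$ are real for real $\lambda<u_c$, the exponentials in $v_T$ have modulus one, and each entry off the identity is $\bigO(e^{-c_2/\e})$. On the lens contour $\Sigma_1$, the jump is $I+i\kappa(\lambda)e^{2i\phi(\lambda)/\e}E_{12}$; the estimate (\ref{kappa3}) gives $|\kappa|\le M$, and Proposition \ref{prop phi} yields $\Im\phi\ge c>0$ on any closed subset of $\Sigma_1\setminus\mathcal U$, whence $|e^{2i\phi/\e}|=e^{-2\Im\phi/\e}$ is exponentially small. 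The analogous argument for $\Sigma_2$, using $\Im\phi<0$ there, handles the lower lens.

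The remaining and most delicate piece is $(u_c,0)\setminus\mathcal U$. The off-diagonal entries of $v_S$ are $i\kappa_+(\lambda)$ and $i\bar\kappa_+(\lambda)$, and the expansion (\ref{kappa1}) gives $\kappa(\lambda;\e)=1+\bigO(\e)$ uniformly on $\Pi_{\theta_0}^+$, so these converge to the entries of $i\sigma_1$ at rate $\bigO(\e)$. For the diagonal entries, the upper-left term $e^{-2i\phi_+/\e}$ is exponentially small by the third inequality of Proposition \ref{prop phi} ($\Im\phi_+<0$ on $[u_c+\delta,0]$), while the lower-right term is
\[
(1-|r(\lambda;\e)|^2)e^{2i\phi_+(\lambda)/\e}=e^{(-2\tau(\lambda)+2i\phi_+(\lambda))/\e}(1+\bigO(\e))
\]
by (\ref{WKB r}), and the last inequality of Proposition \ref{prop phi} gives $-\tau+i\phi_+<0$, producing exponential decay. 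Combining, $v_S=i\sigma_1+\bigO(\e)=v^{(\infty)}(I+\bigO(\e))$ on this arc.

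The main obstacle is bookkeeping the compatibility of the contour: the opening of the lens must simultaneously respect the analyticity domains of $\kappa$ and $\phi$ (including the branch cut of $\rho$ starting at $-1$, which forces $\Sigma_1,\Sigma_2$ to emanate from $-1-\delta$ rather than from $-1$), and lie inside the $\mathcal W$ provided by Proposition \ref{prop phi}. Once this is arranged, the four estimates combine into the uniform $\bigO(\e)$ bound claimed, with exponentially small corrections everywhere except on the subinterval $(u_c+\delta,0)$ where the power-order $\bigO(\e)$ coming from the WKB expansion of $\kappa$ is the binding rate.
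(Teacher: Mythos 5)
Your proof is correct and follows essentially the same route as the paper: the same four-case decomposition ($\lambda>0$ trivially, $\lambda<-1-\delta$ via (\ref{WKB3}) and reality of $\mathcal G$ and $\alpha$, the lens contours via (\ref{kappa3}) and the sign of $\Im\phi$ from Proposition \ref{prop phi}, and $(u_c,0)$ via (\ref{kappa1}), (\ref{WKB r}) and the last two inequalities of Proposition \ref{prop phi}). Your remarks on fixing the contour are only slightly more explicit than the paper's ``under the assumption of a well chosen contour,'' and you correctly identify that the binding $\bigO(\e)$ rate comes from the WKB error of $\kappa$ on $(u_c,0)$ while everything else is exponentially small.
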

\begin{proof}
\begin{itemize}
\item[(i)] For $\lb>0$, $v_S(\lb)=v^{(\infty)}(\lb)$ so that the
result holds trivially here.
\item[(ii)]
For
$\lb\in(u_c,0)\setminus \mathcal U$, it follows from
(\ref{kappa1}) that that the off-diagonal entries of
$ v_S(\lb)$  tend to $i$ with uniformly small error of order $\e$.
The $11$-entry and the $22$-entry of $ v_S(\lb)$
are exponentially small because of (\ref{tau}) and Proposition \ref{prop phi}.
\item[(iii)] For $\lb\in(\Sigma_1\cup\Sigma_2)\setminus \mathcal U$, under the assumption of a well chosen contour in view of
Proposition \ref{prop phi}, it follows from (\ref{kappa3}) and Proposition \ref{prop phi} that
\[v_S(\lb)=v^{(\infty)}(\lb)(I+\bigO(e^{-\frac{c_3}{\e}})), \qquad\mbox{as
$\e\to 0$,}\]
for some constant $c_3>0$.
\item[(iv)] For $\lb<-1-\delta$, $v_S(\lb)\to I$ because of the
uniform convergence of the reflection coefficient, see
(\ref{WKB3}), together with the reality of $\mathcal G$ and $\alpha$.
\end{itemize}
\end{proof}
\begin{remark}
Note that the error term of the jump matrices is exponentially small everywhere except for on the interval $(u_c,0)$, where the WKB approximation for the reflection coefficient causes the $\bigO(\e)$-error.
\end{remark}

\subsection{Outside parametrix}
Ignoring the exponentially small jumps and a small neighborhood
$\mathcal U$ of $u_c$ where the uniform exponential decay of the jump matrices does not
remain valid, our RH problem reduces to the following RH problem for
$P^{(\infty)}$.

\subsubsection*{RH problem for $P^{(\infty)}$}
\begin{itemize}
\item[(a)] $P^{(\infty)}:\mathbb C\setminus [u_c, +\infty) \to
\mathbb C^{2\times 2}$ is analytic, \item[(b)] $P^{(\infty)}$
satisfies the following jump conditions on $(u_c, +\infty)$,
\begin{align}
&P_+^{(\infty)}=P_-^{(\infty)}\sigma_1, &\mbox{ as $\lambda\in (0, +\infty)$},\\
&P_+^{(\infty)}= iP_-^{(\infty)}\sigma_1, &\mbox{ as $\lambda\in
(u_c,0)$},\label{RHP Pinfty b}
\end{align}
\item[(c)]$P^{(\infty)}$ has the following behavior as $\lambda\to\infty$,
\begin{equation}P^{(\infty)}(\lambda)\sim
\begin{pmatrix}1&1\\ i(-\lambda)^{1/2} & -i(-\lambda)^{1/2}\end{pmatrix}
. \label{RHP Pinfty c}\end{equation}
\end{itemize}

\medskip

We can construct the solution of this RH problem explicitly as
follows,
\begin{equation}
\label{def Pinfty}
P^{(\infty)}(\lambda)=(-\lambda)^{1/4}(u_c-\lambda)^{-\sigma_3
/4}\begin{pmatrix}1&1\\ i& -i
\end{pmatrix},
\end{equation}
from which it follows directly that the asymptotic condition (\ref{RHP
Pinfty c}) can be strengthened to
\begin{equation}\label{RHP Pinfty c2}P^{(\infty)}(\lambda)=\left(I+\frac{u_c}{4\lambda}\sigma_3+\bigO(\lambda^{-2})\right)
\begin{pmatrix}1&1\\ i(-\lambda)^{1/2} & -i(-\lambda)^{1/2}\end{pmatrix},
\quad\mbox{as $\lb\to\infty$.}\end{equation}

In the outside region away from $u_c$,
the leading order asymptotics of $S$ will be determined
by $P^{(\infty)}$. To obtain uniform asymptotics and asymptotics beyond the leading term,
we still need to construct a local parametrix near $u_c$ which matches with the outside parametrix at $\partial\mathcal U$.
This is the goal of the next section. Note also that the uniform decay of the jump matrices remains valid near $-1$ and $0$, so that there is no necessity to construct local parametrices near those points.

\subsection{Local parametrix near $u_c$}

The uniform convergence of the jump matrices as $\epsilon\to 0$
breaks down near $u_c$. Here we need to construct a local parametrix
by mapping a suitable model RH problem onto a neighborhood
$\mathcal U$ of $u_c$. We will use a model RH problem which is associated
with a fourth order analogue of the Painlev\'e I equation, the $P_I^2$ equation (\ref{PI20}). The aim of this section is
to construct a local parametrix $P$ in $\mathcal U$ which has
approximately the same jumps as $S$ has in $\mathcal U$, and which
'matches' with $P^{(\infty)}$ at $\partial \mathcal U$. With
'matching' we mean that $P(\lambda)P^{(\infty)}(\lambda)^{-1}$
tends to the identity matrix in a suitable double scaling limit. The appropriate double scaling limit
will turn out to be the one where we let $\epsilon\to 0$ and in the same time we let $x\to
x_c$ and $t\to t_c$ in such a way that
$t-t_c=\bigO(\epsilon^{4/7})$ and
$x-x_c-6u_c(t-t_c)=\bigO(\epsilon^{6/7})$. More precisely, we want
$P$ to satisfy a RH problem of the following form.
\subsubsection*{RH problem for $P$}
\begin{itemize}
\item[(a)]$P:\overline{\mathcal U}\setminus \Sigma_S\to\mathbb C^{2\times 2}$ is
analytic,
\item[(b)]$P$ satisfies the following jump condition on $\mathcal U \cap
\Sigma_S$,
\begin{equation}\label{RHP
P:b}P_+(\lambda)=P_-(\lambda)v_P(\lambda),
\end{equation}
with $v_P$ given by
\begin{equation}\label{vP}
v_{P}(\lb)=\begin{cases}
\begin{array}{ll}
\begin{pmatrix}1&ie^{\frac{2i}{\e}\phi(\lb;x,t)}\\
0&1
\end{pmatrix},&\mbox{ as $\lb\in\Sigma_1$},\\[3ex]
 \begin{pmatrix}1&0\\
ie^{-\frac{2i}{\e}\phi(\lb;x,t)}&1
\end{pmatrix},&\mbox{ as $\lb\in\Sigma_2$,}\\[3ex]
\begin{pmatrix}e^{-\frac{2i}{\e}\phi_+(\lb;x,t)}&i\\
i&0
\end{pmatrix},&\mbox{ as $\lambda\in (u_c,0)$,}
\end{array}
\end{cases}
\end{equation}
\item[(c)] if we perform the double scaling limit where we let $\epsilon\to 0$ and at the
same time we let $x\to x_c$ and $t\to t_c$ in such a way that
\begin{equation}\label{doublescaling}
\lim\dfrac{x- x_c-6u_c (t-t_c)}{(8k\e^6)^{1/7}}=X, \qquad \lim\dfrac{6(t-t_c)}{(4k^3\epsilon^4)^{1/7}}= T,
\end{equation}
the following matching condition holds,
\begin{equation}\label{RHP P:c}
P(\lambda)P^{(\infty)}(\lambda)^{-1}\to I, \qquad \mbox{ for
$\lambda\in \partial \mathcal U$.}
\end{equation}
\end{itemize}
We will show later on that $v_P$ approximates $v_S$ in the following
sense,
\begin{equation}\label{asymptotics vSP}
v_S(z)v_P^{-1}(z)=I+\bigO(\e), \qquad\mbox{ uniformly for
$z\in\mathcal U\cap \Sigma_S$ as $\e\to 0$}.
\end{equation}

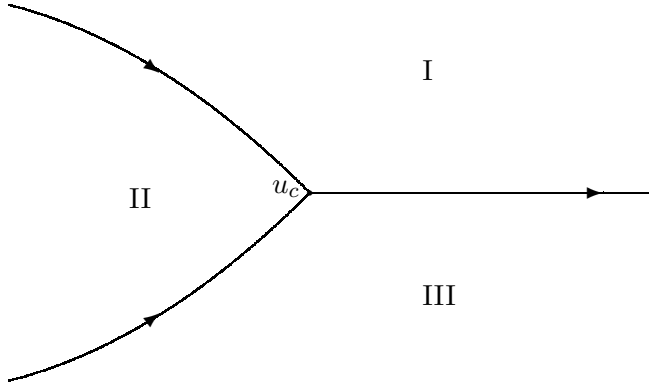
\begin{figure}[t]
\begin{center}
    \setlength{\unitlength}{1mm}
    \begin{picture}(137.5,26)(22,11.5)

        \put(85,25){\thicklines\circle*{.8}} \put(80,25){$u_c$}
        \put(85,25){\line(1,0){45}}
        \put(124,25){\thicklines\vector(1,0){.0001}}
        \qbezier(85,25)(65,45)(45,50) \put(65,41){\thicklines\vector(4,-3){.0001}}
        \qbezier(85,25)(65,5)(45,0) \put(65,9){\thicklines\vector(4,3){.0001}}

\put(100,40){I} \put(61,23){II} \put(100,10){III}
    \end{picture}\vspace{1cm}
    \caption{The jump contour for $P$}
    \label{Figure: contour P}
\end{center}
\end{figure}

\subsubsection{Model RH problem associated with $P_I^2$}

Recall that the $P_I^2$ equation is the following fourth order differential
equation for $U=U(X,T)$,
\begin{equation}\label{PI2}
    X=T\, U-\left[\frac{1}{6}U^3+\frac{1}{24}(U_X^2+2UU_{XX})
        +\frac{1}{240}U_{XXXX}\right].
\end{equation}
It was conjectured in \cite{dubcr} and proven in \cite{CV1} that
this equation has a real pole-free solution $U(X,T)$. In agreement
with the conjecture of Dubrovin, it follows from previous results in
\cite{Menikoff, Moore} that the pole-free solution is unique. The RH
problem characterizing this solution is the following, see
\cite{Kapaev,CV1}.

\subsubsection*{RH problem for $\Psi$:}

\begin{itemize}
    \item[(a)] $\Psi=\Psi(\zeta;x,t)$ is analytic for $\zeta\in\mathbb{C}\setminus\Gamma$, with $\Gamma$ as shown in Figure
    \ref{figure: contour gamma}.
    \item[(b)] $\Psi$ satisfies the following jump relations on
    $\Gamma$,
    \begin{align}
        \label{RHP Psi: b1}
        &\Psi_+(\zeta)=\Psi_-(\zeta)
        \begin{pmatrix}
            0 & 1 \\
            -1 & 0
        \end{pmatrix},& \mbox{for $\zeta\in\Gamma_3$,} \\[1ex]
        \label{RHP Psi: b2}
        &\Psi_+(\zeta)=\Psi_-(\zeta)
        \begin{pmatrix}
            1 & 1 \\
            0 & 1
        \end{pmatrix},& \mbox{for $\zeta\in\Gamma_1$,} \\[1ex]
        \label{RHP Psi: b3}
        &\Psi_+(\zeta)=\Psi_-(\zeta)
        \begin{pmatrix}
            1 & 0 \\
            1 & 1
        \end{pmatrix},& \mbox{for $\zeta\in\Gamma_2\cup \Gamma_4$.}
    \end{align}
    \item[(c)] $\Psi$ has the following behavior at infinity,
    uniformly for $(X,T)$ in compact subsets of $\mathbb C^2\setminus \mathcal
    P$, where $\mathcal P$ denotes the set of poles of $U$,
    \begin{multline}\label{RHP Psi: c}
        \Psi(\zeta)=\zeta^{-\frac{1}{4}\sigma_3}N
        \left(I+Q\sigma_3\zeta^{-1/2}
        +\frac{1}{2}\begin{pmatrix}Q^2 & iU\\-iU &
        Q^2\end{pmatrix}\zeta^{-1}\right. \\
        \left. +R\sigma_3\zeta^{-3/2}+\bigO(\zeta^{2})\right)
        e^{-\theta(\zeta;X,T)\sigma_3},
    \end{multline}
    where $U=U(X,T)$ is the real pole-free solution of the $P_I^2$ equation (\ref{PI2}),
    $\partial_X Q(X,T)=U(X,T)$, $R$
    is some unimportant function of $X$ and $T$, and $N$ and $\theta$ are given by
    \begin{equation}\label{def N theta}
    N=\frac{1}{\sqrt 2}\begin{pmatrix}1&1\\-1&1\end{pmatrix}e^{-\frac{\pi
    i}{4}\sigma_3}, \qquad
    \theta(\zeta;X,T)=\frac{1}{105}\zeta^{7/2}-\frac{T}{3}\zeta^{3/2}+X\zeta^{1/2}.
    \end{equation}
\end{itemize}
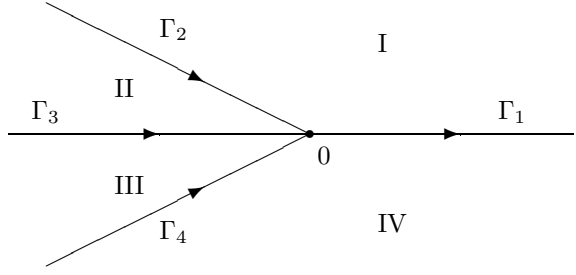
\begin{figure}[t]
    \begin{center}
    \setlength{\unitlength}{1mm}
    \begin{picture}(95,47)(0,2)
        \put(30,38){\small $\Gamma_2$}
        \put(13,27){\small $\Gamma_3$}
        \put(30,11){\small $\Gamma_4$}
        \put(75,27){\small $\Gamma_1$}

        \put(59,36){\small $\I$}
        \put(24,30){\small $\II$}
        \put(24,17){\small $\III$}
        \put(59,12){\small $\IV$}

        \put(50,25){\thicklines\circle*{.9}}
        \put(51,21){\small 0}

        \put(50,25){\line(-2,1){35}} \put(36,32){\thicklines\vector(2,-1){.0001}}
        \put(50,25){\line(-2,-1){35}} \put(36,18){\thicklines\vector(2,1){.0001}}
        \put(50,25){\line(-1,0){40}} \put(30,25){\thicklines\vector(1,0){.0001}}
        \put(50,25){\line(1,0){35}} \put(70,25){\thicklines\vector(1,0){.0001}}
    \end{picture}
    \caption{The jump contour for $\Gamma$ for $\Psi$}
        \label{figure: contour gamma}
    \end{center}
\end{figure}

\begin{remark}\label{remark: poles}
It is important for us that, for each $(X,T)\in\mathbb R^2$, there
is a neighborhood $\mathcal A\subset \mathbb C^2$ of $(X,T)$ such
that $\Psi(.;x,t)$ is well-defined for all $(x,t)\in\mathcal A$ and
such that the asymptotic condition (\ref{RHP Psi: c}) holds
uniformly for $(x,t)\in\mathcal A$. This is only guaranteed by the
fact that $U(X,T)$ has no real poles \cite{CV1}.
\end{remark}
\begin{remark}
There is some freedom in choosing the angles that $\Gamma_2$ and $\Gamma_4$ make with the negative real axis, as long as those angles are in absolute value smaller than $\frac{2\pi}{7}$ (which is the angle for the Stokes lines closest to the negative real line). For simplicity, we fix $\Gamma_2=\{z\in\mathbb C: \arg z=\frac{6\pi}{7}\}$ and $\Gamma_4=\{z\in\mathbb C: \arg z=\frac{-6\pi}{7}\}$ to be the so-called anti-Stokes lines.
\end{remark}

The RH solution $\Psi$ provides in each sector fundamental bases of solutions to the Lax equations associated to the $P_I^2$ equation. Those equations are
\begin{equation}\label{lax pair}\Psi_\zeta=A\Psi, \qquad \Psi_X=B\Psi, \qquad \Psi_T=C\Psi,\end{equation}
where the matrices $A$, $B$, and $C$ are polynomials in $\zeta$, given by
{\small \begin{align*}\label{introduction: U}
   & A = \frac{1}{240}
    \begin{pmatrix}
        -4U_X \zeta-(12UU_X+U_{XXX}) &
        8\zeta^2+8U\zeta+(12U^2+2U_{XX}-120T) \\[1ex]
        A_{21}
        & 4U_X \zeta+(12UU_X+U_{XXX})
    \end{pmatrix},\\[3ex]
    & A_{21} =
    8\zeta^3-8U\zeta^2-(4U^2+2U_{XX}+120T)\zeta+
    (16U^3-2U_X^2+4UU_{XX}+240X),
\end{align*}
}
\begin{equation*}\label{introduction: W}
    B =\begin{pmatrix}
        0 & 1 \\
        \zeta-2U & 0
    \end{pmatrix},\qquad C =\frac{1}{6}\begin{pmatrix}
        U_X & -2\zeta -2U \\
        -2\zeta^2+2U\zeta +c(X,T) & -U_X
    \end{pmatrix}.
\end{equation*}
Compatibility of the first and the second equation in (\ref{lax pair}) shows (after a not so short derivation) that $U$ solves the $P_I^2$ equation, while compatibility of the first and the third equation in (\ref{lax pair}) shows that $U(X,T)$ solves the KdV equation normalized to the form $U_T+UU_X+\frac{1}{12} U_{XXX}=0$, see also \cite{dubcr}. One can make the rescalings as in formula (\ref{univer}) to obtain the KdV equation in the form (\ref{KdV}).

\subsubsection{Modified model RH problem}
In order to obtain a RH problem that we can use to construct the
local parametrix near $u_c$, we transform the RH problem for $\Psi$ by
defining $\Phi$ in the following way,
\begin{align}\label{def Phi}
&\Phi(\zeta;X,T)= \begin{cases} e^{\frac{\pi
i}{4}\sigma_3}\Psi(\zeta)e^{\theta(\zeta;X,T)\sigma_3}
\begin{pmatrix}0&-1\\1&0\end{pmatrix}e^{\frac{\pi
i}{4}\sigma_3},&\mbox{ as $\Im\zeta >0$,}\\[3ex]
e^{-\frac{\pi
i}{4}\sigma_3}\Psi(\zeta)e^{\theta(\zeta;X,T)\sigma_3}e^{\frac{\pi
i}{4}\sigma_3},&\mbox{ as $\Im\zeta <0$.}
\end{cases}
\end{align}
Clearly $\Phi$ is analytic in $\mathbb C\setminus \Gamma$, where
$\Gamma$ is the jump contour for $\Psi$ shown in Figure
\ref{figure: contour gamma}. Using the fact that
$\theta_+(\zeta;X,T)=-\theta_-(\zeta;X,T)$ for $\zeta\in(-\infty,
0)$ and (\ref{RHP Psi: b1}), one can check using the definition
(\ref{def Phi}) that $\Phi$ has no jump on $(-\infty,0)$
and is thus analytic on $(-\infty,0)$. Using (\ref{RHP Psi: b2}),
(\ref{RHP Psi: b3}), and (\ref{def Phi}), one checks that the
jumps along the other rays in the contour do not vanish but are
modified by the transformation. By (\ref{RHP Psi: c}) and
(\ref{def Phi}), also the asymptotic behavior of $\Phi$ is
different from the asymptotic behavior of $\Psi$. $\Phi$ satisfies
the following RH problem.

\subsubsection*{RH problem for $\Phi$}
\begin{itemize}
    \item[(a)] $\Phi$ is analytic for $\zeta\in\mathbb{C}
    \setminus\widehat\Gamma$, with $\widehat\Gamma=\Gamma_1\cup\Gamma_2\cap\Gamma_4$.
    \item[(b)] $\Phi$ satisfies the following jump relations on
    $\widehat\Gamma$,
    \begin{align}
        \label{RHP Phi: b1}
        &\Phi_+(\zeta)=\Phi_-(\zeta)\begin{pmatrix}
            e^{-2\theta(\zeta;X,T)} & i \\
            i & 0
        \end{pmatrix}
        ,& \mbox{for $\zeta\in\Gamma_1$,} \\[1ex]
        \label{RHP Phi: b2}
        &\Phi_+(\zeta)=\Phi_-(\zeta)\begin{pmatrix}
            1 & ie^{2\theta(\zeta;X,T)} \\
            0 & 1
        \end{pmatrix}
        ,& \mbox{for $\zeta\in\Gamma_2$.}
        \\[1ex]
        \label{RHP Phi: b3}
        &\Phi_+(\zeta)=\Phi_-(\zeta)
        \begin{pmatrix}
            1 & 0 \\
            ie^{2\theta(\zeta;X,T)} & 1
        \end{pmatrix},& \mbox{for $\zeta\in\Gamma_4$.}
    \end{align}
    \item[(c)] $\Phi$ has the following behavior at infinity,
    \begin{multline}\label{RHP Phi: c1}
        \Phi(\zeta)=e^{-\frac{\pi i}{4}\sigma_3}\zeta^{-\frac{1}{4}\sigma_3}N\\
        \qquad\times\quad
        \left(I+Q\sigma_3\zeta^{-1/2}
        +\frac{1}{2}\begin{pmatrix}Q^2 & iU\\-iU &
        Q^2\end{pmatrix}\zeta^{-1}+R\sigma_3\zeta^{-3/2}+\bigO(\zeta^{2})\right) \\
        \qquad\qquad\qquad\times
        \begin{cases}\begin{pmatrix}0&-1\\1&0\end{pmatrix}e^{\frac{\pi i}{4}\sigma_3},
        &\mbox{as $\zeta\to\infty$ with $\Im\zeta>0$,}\\[3ex]
        e^{\frac{\pi
        i}{4}\sigma_3},&\mbox{as $\zeta\to\infty$ with $\Im\zeta<0$,}
        \end{cases}
    \end{multline}
    which can be rewritten in the following way, both for
    $\zeta\to\infty$ in the upper and the lower half plane,
    \begin{multline}\label{RHP Phi: c}
        \Phi(\zeta)=\frac{1}{\sqrt
        2}(-\zeta)^{-\frac{1}{4}\sigma_3}\begin{pmatrix}1&1\\-1&1\end{pmatrix}\left(I+iQ\sigma_3(-\zeta)^{-1/2}-\frac{1}{2}
        \begin{pmatrix}Q^2&U\\U&Q^2\end{pmatrix}(-\zeta)^{-1}\right. \\
        \left. +iR\sigma_3(-\zeta)^{-3/2}+\bigO(\zeta^{-2})\right).
    \end{multline}
    This behavior holds uniformly for $(X,T)$ in compact subsets of
    $\mathbb C^2\setminus \mathcal P$, where $\mathcal P$ is the set
    of poles of $U$.
\end{itemize}

\subsubsection{Construction of the parametrix}

We are now ready to specify the form of the parametrix $P$. Let $P$ be of the following form,
\begin{equation}\label{definition hatP}
P(\lambda)=E(\lambda;\epsilon)\Phi(\epsilon^{-2/7}f(\lambda);
\epsilon^{-6/7}g_1(\lambda;\tilde{x}),\epsilon^{-4/7}g_2(\lambda;t)),
\end{equation}
where $E$, $f$, $g_1$, and $g_2$ are analytic in $\mathcal U$ and $\tilde{x}=x-6u_ct$.
Furthermore we require that $f(\mathbb R\cap \mathcal U)\subset
\mathbb R$ with $f(u_c)=0$ and $f'(u_c)>0$, in other words $f$ is
a conformal mapping from $\mathcal U$ to a neighborhood of the
origin. We will determine the precise form of $E$, $f$, $g_1$, and
$g_2$ later.

\medskip

First we specify our choice of the contour $\Sigma_S\cap \mathcal
U$ by requiring that $f(\Sigma_S\cap \mathcal U)\subset
\tilde\Gamma$, which we can do since $f$ is a conformal mapping.
By this construction $P$ and $S$ have their jumps on the
same contour in $\mathcal U$. We will now use the freedom we still
have in defining $f$, $g_1$, and $g_2$ in order to create jumps
for $P$ that are the same as the ones specified in (\ref{RHP P:b})-(\ref{vP}). Afterwards we will define $E$ in such a way that the
matching condition (\ref{RHP P:c}) is satisfied as well.

\subsubsection{Definition of $f$, $g_1$, and $g_2$}

Our goal is to define $f$, $g_1$, and $g_2$ in such a way that
\begin{equation}\label{condition f g 1}
\begin{array}{ll}
\theta(\e^{-2/7}f(\lambda);\e^{-6/7}g_1(\lambda;\tilde{x}),\e^{-4/7}g_2(\lambda;t))=\frac{i}{\e}
\phi_+(\lambda;\tilde{x},t),&\mbox{for $\lambda\in (u_c, 0)\cap\mathcal
U$,}\\[2ex]
\theta(\e^{-2/7}f(\lambda);\e^{-6/7}g_1(\lambda;\tilde{x}),\e^{-4/7}g_2(\lambda;t))=\frac{i}{\e}
\phi(\lambda;\tilde{x},t),&\mbox{for $\lambda\in\Sigma_1\cap\mathcal
U$,}\\[2ex]
\theta(\e^{-2/7}f(\lambda);\e^{-6/7}g_1(\lambda;\tilde{x}),\e^{-4/7}g_2(\lambda;t))=-\frac{i}{\e}
\phi(\lambda;\tilde{x},t),&\mbox{for $\lambda\in\Sigma_2\cap\mathcal U$,}\\
\end{array}
\end{equation}
with $\phi$ and $\theta$ defined by (\ref{definition phi}) and
(\ref{def N theta}). For simplicity, we changed variables for $\phi$ and consider it now as a function of $\tilde x$ instead of $x$, with
\[
\tilde{x}=x-6u_ct.\] We should now compare the jump matrices
(\ref{vP}) for $P$ to the 'model' jump matrices (\ref{RHP Phi:
b1})-(\ref{RHP Phi: b3}) for $\Phi$. The above conditions
(\ref{condition f g 1}) imply that the parametrix $P$, defined as in
(\ref{definition hatP}), satisfies the requested jump condition
(\ref{RHP P:b}).

\medskip

With $\theta(\zeta)$ having its branch cut along the negative real
line, one verifies that the equations in (\ref{condition f g 1})
are satisfied if
\begin{equation}\label{condition f g 2}
\theta(-f(\lambda);-g_1(\lambda;\tilde{x}),g_2(\lambda;t))=-\phi(\lambda;\tilde{x},t),\qquad \mbox{for $\lambda\in\mathcal
U$,}
\end{equation}
In agreement with this condition, we first define $f$ in
such a way that
\begin{equation}\label{def f}
\frac{1}{105}\left(-f(\lambda)\right)^{7/2}=-\phi(\lambda;\tilde{x}_c,t_c).
\end{equation}
Using (\ref{uxx1}) and (\ref{phi}), integrating by parts twice leads to
\begin{multline}
\label{phi1}
\phi(\lambda;\tilde{x},t)=\sqrt{u_c-\lambda}(\tilde{x}-\tilde{x}_c)+4(u_c-\lambda)^{\frac{3}{2}}
(t-t_c)\\
+\dfrac{4}{15}\int_{\lambda}^{u_c}f'''_-(\xi)(\xi-\lambda)^{\frac{5}{2}}d\xi.
\end{multline}
Keeping track of the branch cuts for the fractional powers, this implies that $f$ is analytic
in $\mathcal U$ with
 \begin{equation}\label{fu} f(u_c)=0, \qquad
f'(u_c)=(8k)^{\frac{2}{7}},\qquad
k=-f_-'''(u_c)>0.
\end{equation}
Now we can define $g_2$ by requiring that
\begin{equation}\label{def g2}
\frac{1}{3}g_2(\lambda;t)(-f(\lambda))^{3/2}=\phi(\lambda;\tilde{x}_c,t)-
\phi(\lambda;\tilde{x}_c,t_c)=4(u_c-\lambda)^{\frac{3}{2}}(t-t_c).
\end{equation}
Since $f$ is a conformal mapping, this defines $g_2$ analytically
in $\mathcal U$, with
\begin{equation}\label{g2u}g_2(u_c;t)=\dfrac{6(t-t_c)}{(4k^3)^{1/7}}.
\end{equation}
Finally we define
$g_1$ by the equation
\begin{equation}\label{def g1}
g_1(\lambda;\tilde{x},t)(-f(\lambda))^{1/2}=\phi(\lambda;\tilde{x},t)-
\phi(\lambda;\tilde{x}_c,t)=\sqrt{u_c-\lambda}(\tilde{x}-\tilde{x}_c).
\end{equation}
Again using the fact that $f$ is a conformal mapping, $g_1$ is
analytic in $\mathcal U$, with
\begin{equation}\label{g1u}g_1(u_c;\tilde{x},t)=\dfrac{x-x_c-6u_c(t-t_c)}{(8k)^{\frac{1}{7}}}.\end{equation}
Summing up (\ref{def f}), (\ref{def g2}), and (\ref{def g1}), we
find using (\ref{def N theta}) that indeed condition
(\ref{condition f g 2}) is satisfied. This means that the jump
conditions (\ref{RHP P:b})-(\ref{vP}) for $P$ are valid.

\medskip

The function $\Phi(\zeta;X,T)$ is not defined for values of $X$ and $T$ where $U(X,T)$ has a pole.
In order to ensure that the parametrix is
well-defined, we need to know that $U$ has no pole at $(\epsilon^{-6/7}g_1(\lambda;\tilde{x},t),\epsilon^{-4/7}g_2(\lambda;t))$.
In the double scaling limit where $\epsilon\to 0$ and at the
same time $x\to x_c$ and $t\to t_c$ in such a way that
\[
\lim\dfrac{x- x_c-6u_c (t-t_c)}{(8k\e^6)^{1/7}}=X, \qquad \lim\dfrac{6(t-t_c)}{(4k^3\epsilon^4)^{1/7}}= T,
\]
it follows from (\ref{def g2})-(\ref{g1u}) that
\[\epsilon^{-6/7}g_1(\lambda;\tilde{x},t)\to X, \qquad \epsilon^{-4/7}g_2(\lambda;t)\to T,\qquad \mbox{ as $\lambda\to u_c$.}\]
From the fact that $U(X,T)$ is meromorphic
both in $X$ and $T$ and that it has no poles for real values of
$X,T$, it follows that there is a pole-free neighborhood of $(X,T)$ in $\mathbb C^2$ (cf.\ Remark \ref{remark: poles}), in which
$(\epsilon^{-6/7}g_1(\lambda;\tilde{x},t),\epsilon^{-4/7}g_2(\lambda;t))$ is contained for sufficiently small $\epsilon$, provided that $\lambda$ lies in a sufficiently small neighborhood $\mathcal U$ of $u_c$.

\medskip

In order to satisfy also the matching condition (\ref{RHP P:c}), we
define the analytic pre-factor $E$ by
\begin{equation}
E(\lambda;\epsilon)=\frac{1}{\sqrt
2}P^{\infty}(\lambda)\begin{pmatrix}1&-1\\1&1\end{pmatrix}
(-\epsilon^{-2/7}f(\lambda))^{\frac{\sigma_3}{4}}.
\end{equation}
Using the definition (\ref{def Pinfty}) of $P^{(\infty)}$, one
checks directly that $E$ is analytic in $\mathcal U$, as it should be in order to have a suitable parametrix. The
matching condition (\ref{RHP P:c}) for $\lambda\in\partial \mathcal U$ in the
double scaling limit can now be specified as follows,
\begin{multline}\label{RHP P:c2}
P(\lambda)P^{(\infty)}(\lambda)^{-1}=P^{\infty}(\lambda)\left(I+iQ\sigma_3(-f(\lambda))^{-1/2}\epsilon^{1/7}\right.\\
\left.-\frac{1}{2}
        \begin{pmatrix}Q^2&U\\U&Q^2\end{pmatrix}(-f(\lambda))^{-1}\epsilon^{2/7}
        +iR\sigma_3(-f(\lambda))^{-3/2}\epsilon^{3/7}
        +\bigO(\epsilon^{4/7})\right)
        P^{\infty}(\lambda)^{-1}.
\end{multline}
Here we used the abbreviations
\begin{align}\label{abbrevyqr}
&U=U(\epsilon^{-6/7}g_1(\lb;\tilde{x},t),\epsilon^{-4/7}g_2(\lb;t)),
\\ &Q=Q(\epsilon^{-6/7}g_1(\lb;\tilde{x},t),\epsilon^{-4/7}g_2(\lb;t)),
\\
&R=R(\epsilon^{-6/7}g_1(\lb;\tilde{x},t),\epsilon^{-4/7}g_2(\lb;t)).
\end{align}
This ends the construction of the local parametrix.

\subsection{Final transformation}

We define
\begin{align}\label{def R}&
R(\lambda)=\begin{cases}S(\lambda)P^{(\infty)}(\lambda)^{-1},
& \mbox{ as $\lambda\in \mathbb C\setminus \mathcal U$,}\\
S(\lambda)P(\lambda)^{-1}, & \mbox{ as $\lambda\in \mathcal U$.}
\end{cases}
\end{align}
Note first that outside parametrix has been constructed in such a way that $R$ has no jump on
$(0,+\infty)$.
For $z\in\mathcal U\cap\Sigma_S$, we have that
\begin{equation}R_-^{-1}(z)R_+(z)=P_-(z)v_S(z)v_P^{-1}(z)P_-^{-1}(z).\end{equation}
On one hand it follows from the construction of the parametrix that
$P_-(z)$ is uniformly bounded for $z\in\mathcal U\cap\Sigma_S$. On
the other hand
\begin{equation}
v_Sv_P^{-1}=\begin{cases}
\begin{array}{ll}
\begin{pmatrix}1&i(\kappa -1)e^{\frac{2i}{\e}\phi}\\
0&1
\end{pmatrix},&\mbox{ on $\Sigma_1\cap \mathcal U$},\\[3ex]
 \begin{pmatrix}1&0\\
i(\kappa^* -1)e^{-\frac{2i}{\e}\phi}&1
\end{pmatrix},&\mbox{ on $\Sigma_2\cap \mathcal U$,}\\[3ex]
\begin{pmatrix}\kappa&i(\kappa-1)e^{-\frac{2i}{\e}\phi_+}\\
-i(1-|r|^2)e^{\frac{2i}{\e}\phi_+}&\kappa^*+(1-|r|^2)
\end{pmatrix},&\mbox{ on $(u_c,0)\cap \mathcal U$.}
\end{array}
\end{cases}
\end{equation}
Except for the $21$-entry on $(u_c,0)$, the exponentials in the above matrices are uniformly bounded on the
jump contours inside $\mathcal U$ because of (\ref{condition f g 1})
and the fact that the jumps for $\Phi$ are uniformly bounded on the
jump contour $\widehat\Gamma$. Here the WKB
approximation (\ref{kappa1}) for $\kappa$ ensures that
$v_Sv_P^{-1}=I+\bigO(\e)$. If in addition we use (\ref{WKB r}) and Proposition \ref{prop phi}, the
same follows for the $21$-entry on $(u_c,0)\cap \mathcal U$. By Proposition \ref{prop
vs}, one checks that the RH problem for $R$ has the following form.

\subsubsection*{RH problem for $R$}
\begin{itemize}
\item[(a)]$R$ is analytic in $\mathbb C\setminus\Sigma_R$, with
$\Sigma_R=(\Sigma_S\cup\partial\mathcal U)\setminus (0,+\infty)$ as shown in Figure \ref{figure: sigmaR},
\item[(b)]$R_+(\lambda)=R_-(\lambda)v_R(\lambda)$, where the jump matrix $v_R$ has the following double scaling asymptotics,
\begin{align}\label{vR}
&
v_R(\lambda)=\begin{cases}P(\lambda)P^{(\infty)}(\lambda)^{-1},&\mbox{
as $\lambda\in \mathbb \partial \mathcal U$,}\\
I+\bigO(\epsilon),&\mbox{ as $\lambda\in
\Sigma_R\setminus \partial \mathcal U$.}
\end{cases}
\end{align}
\item[(c)]$R(\lambda)\to I$ as $\lambda\to\infty$.
\end{itemize}

This is the final RH problem in our analysis. In the next section, we will obtain uniform asymptotics for $R$ which will enable us to prove Theorem \ref{theorem: main}.

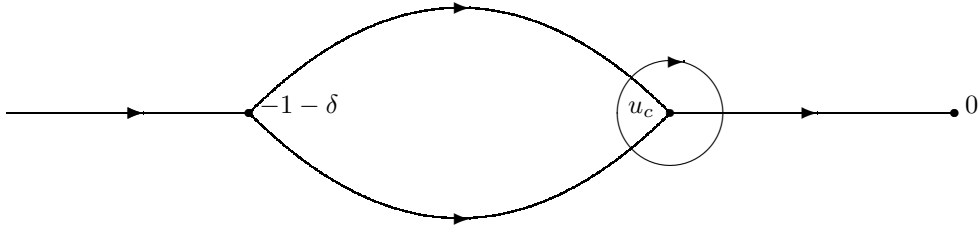
\begin{figure}[t]
\begin{center}

    \setlength{\unitlength}{1.4mm}
    \begin{picture}(137.5,26)(22,11.5)
        \put(112,25){\thicklines\circle*{.8}}
        \put(45,25){\thicklines\circle*{.8}}
        \put(22,25){\line(1,0){23}}
        \put(35,25){\thicklines\vector(1,0){.0001}}
        \put(85,25){\circle{15}}
\put(86.3,29.9){\thicklines\vector(1,0){.0001}}
        \put(112,25){\thicklines\circle*{.8}}
        \put(113,25){\small $0$}
        \put(45,25){\thicklines\circle*{.8}}
        \put(46,25){\small $-1-\delta$}
        \put(85,25){\thicklines\circle*{.8}} \put(81,25){\small $u_c$}
        \put(99,25){\thicklines\vector(1,0){.0001}}
        \put(85,25){\line(1,0){27}}
        \put(22,25){\line(1,0){23}}
        \put(35,25){\thicklines\vector(1,0){.0001}}
        \qbezier(45,25)(65,45)(85,25) \put(66,35){\thicklines\vector(1,0){.0001}}
        \qbezier(45,25)(65,5)(85,25) \put(66,15){\thicklines\vector(1,0){.0001}}
    \end{picture}
    \caption{The contour $\Sigma_R$ after the third and final
        transformation.}
    \label{figure: sigmaR}
\end{center}
\end{figure}

\section{Proof of Theorem \ref{theorem: main}}\label{section: proof}

Using (\ref{RHP P:c2}), we can expand the jump matrix $v_R$ in
fractional powers of $\e$ in the double scaling limit,
\begin{equation}\label{vRexpansion}
v_R(\lambda)=I+\epsilon^{1/7}\Delta^{(1)}(\lambda)+\epsilon^{2/7}\Delta^{(2)}(\lambda)+
\bigO(\epsilon^{3/7}),
\end{equation}
with
\begin{align}\label{delta1}
&\Delta^{(1)}(\lambda)=iQ \cdot
(-f(\lambda))^{-1/2}P^{(\infty)}(\lambda)\sigma_3
P^{(\infty)}(\lambda)^{-1},\\
\label{delta2}&\Delta^{(2)}(\lambda)=-\frac{1}{2}(-f(\lambda))^{-1}P^{(\infty)}(\lambda)
\begin{pmatrix}Q^2&Y\\Y&Q^2\end{pmatrix}
P^{(\infty)}(\lambda)^{-1},
\end{align}
for $\lambda\in\partial \mathcal U$, and
\begin{equation}
\Delta^{(1)}(\lambda)=\Delta^{(2)}(\lambda)=0, \qquad \mbox{for $\lambda\in\Sigma_R\setminus\partial\mathcal U$,}
\end{equation}
 since the jump matrices are equal to $I$ up to an error of $\bigO(\e)$ on the other parts of the contour.
Note that the functions $\Delta^{(1)}$ and $\Delta^{(2)}$ are meromorphic functions in $\mathcal U$ with poles at $u_c$.
\medskip

It is a well-known result that the uniform asymptotic expansion (\ref{vRexpansion}) for the jump matrix yields an asymptotic expansion of the same form for the RH solution $R$, uniformly for $\lambda\in\mathbb C\setminus \Sigma_R$,
\begin{equation}\label{Rexpansionepsilon}
R(\lambda)=I+\epsilon^{1/7}R^{(1)}(\lambda)+\epsilon^{2/7}R^{(2)}(\lambda)
+\bigO(\epsilon^{3/7}).
\end{equation}
This can proven exactly as in \cite{Deift, DKMVZ1} by estimating Cauchy-type operators associated to the RH problem.

\begin{remark}
We should note that, in \cite{Deift, DKMVZ2, DKMVZ1}, the local
parametrices erase the jumps inside disks near the special points
completely. Because of the $\bigO(\e)$-error of the reflection
coefficient to its WKB approximation, we have not been able to
remove the jumps in $\mathcal U$ completely. This leads to a final
RH problem which is similar to the one in \cite{BDJ}. However the
remaining jumps in the interior of $\mathcal U$ are of smaller order
than the relevant jumps on $\partial\mathcal U$, and therefore this does not
cause any further problems.
\end{remark}

The compatibility of the expansion (\ref{vRexpansion}) for the jump
matrix with the expansion (\ref{Rexpansionepsilon}) for the RH
solution leads, using the jump relation
$R_+(\lambda)=R_-(\lambda)v_R(\lambda)$, to the following conditions
for $\lambda\in\partial \mathcal U$,
\begin{align}\label{RHPR1}
& R_+^{(1)}(\lambda)=R_-^{(1)}(\lambda) + \Delta^{(1)}(\lambda),\\
\label{RHPR2}& R_+^{(2)}(\lambda)=R_-^{(2)}(\lambda) +
R_-^{(1)}(\lambda)\Delta^{(1)}(\lambda)+ \Delta^{(2)}(\lambda).
\end{align}
Note in addition that $R(\lambda)\to I$ as $\lambda\to\infty$, and
thus also $R^{(j)}(\lambda)\to I$ for $j=1,2$. The jump and asymptotic conditions
constitute additive RH problems for $R^{(1)}$ and $R^{(2)}$. By inspection we
see that their unique solutions are given by
\begin{align}\label{R1}
& R^{(1)}(\lambda)= \begin{cases}\frac{1}{\lambda
-u_c}\Res(\Delta^{(1)};u_c),
&\mbox{ as $\lambda\in\mathbb C\setminus \mathcal U$}\\
\frac{1}{\lambda -u_c}\Res(\Delta^{(1)};u_c)-\Delta^{(1)}(\lambda),
&\mbox{ as $\lambda\in \mathcal U$,}\end{cases}\\
\label{R2} & R^{(2)}(\lambda)= \nonumber\\&\begin{cases} \frac{1}{\lambda
-u_c}\Res(R^{(1)}\Delta^{(1)}+\Delta^{(2)};u_c),
&\mbox{ as $\lambda\in\mathbb C\setminus \mathcal U$,}\\
\frac{1}{\lambda
-u_c}\Res(R^{(1)}\Delta^{(1)}+\Delta^{(2)};u_c)-R^{(1)}\Delta^{(1)}(\lambda)
-\Delta^{(2)}(\lambda),
&\mbox{ as $\lambda\in \mathcal U$.}\\
\end{cases}
\end{align}
After a straightforward calculation we find using (\ref{delta1}),
(\ref{delta2}), and (\ref{def Pinfty}) that, for $\lambda\in\mathbb
C\setminus \mathcal U$,
\begin{align}
& R^{(1)}(\lambda)=-Q f'(u_c)^{-1/2}\frac{1}{\lambda -
u_c}\begin{pmatrix}0 & 1\\0&0
\end{pmatrix},\label{eqR1}\\
& R^{(2)}(\lambda)=\begin{pmatrix}* & 0\\ * & *
\end{pmatrix},\label{eqR2}
\end{align}
where the $*$'s denote unimportant entries, and where we have now written
\[Q=Q(\epsilon^{-6/7}g_1(u_c;\tilde{x},t),\epsilon^{-4/7}g_2(u_c;t)).\]
We observe already that the main sub-leading terms in the asymptotic
expansion for $R$ are determined completely by the matching of the
local parametrix $P$ with the outside parametrix $P^{(\infty)}$.

\medskip

Besides the expansion in negative powers of $\e$, $R$ admits also an
expansion in negative powers of $\lb$ as $\lambda\to\infty$
\cite{Deift},
\begin{equation}\label{Rexpansion}
R(\lambda)=I+\frac{R_1}{\lambda} +\bigO(\lambda^{-2}), \qquad \mbox{
as $\lambda\to\infty$}.
\end{equation}
Compatibility of the small $\epsilon$-expansion (\ref{Rexpansionepsilon}) with the large $\lambda$-expansion
 (\ref{Rexpansion})
learns us the following in the double scaling limit,
\begin{equation}\label{R1expansion}
R_{1,12}(x,t,\epsilon)=-\epsilon^{1/7}Q\,
f'(u_c)^{-1/2}+\bigO(\epsilon^{3/7}).\end{equation} Now it turns out
that the KdV solution $u(x,t,\e)$ is contained in $R_1$. Indeed by
(\ref{def R}), we have that
$S(\lambda)=R(\lambda)P^{(\infty)}(\lambda)$ for large $\lambda$,
which means by (\ref{RHP Pinfty c2}) and (\ref{Rexpansion}) that
\begin{equation}
S_{11}(\lambda)=1+i\frac{R_{1,12}}{(-\lambda)^{1/2}}+\bigO(\lambda^{-1}),
\qquad\mbox{ as $\lambda\to\infty$}.
\end{equation}
In view of (\ref{uS}) we obtain using $\partial_X Q=U$ that
\begin{eqnarray}
\label{uR} u(x,t,\epsilon)&=&u_c-2i\epsilon\partial_x
R_{1,12}(\tilde{x},t,\epsilon)\nonumber \\
&=&u_c+2f'(u_c)^{-1/2}\epsilon^{8/7}\partial_xQ(\epsilon^{-6/7}g_1(u_c;\tilde{x},t),\epsilon^{-4/7}g_2(u_c;t))
+\bigO(\epsilon^{4/7})\nonumber \\
&=&u_c+\left(\dfrac{2\epsilon^2}{k^2}\right)^{1/7}
U(\epsilon^{-6/7}g_1(u_c;\tilde{x},t),\epsilon^{-4/7}g_2(u_c;t))+\bigO(\epsilon^{4/7}).\nonumber
\end{eqnarray}
By (\ref{g2u}) and (\ref{g1u}), this completes the proof of Theorem \ref{theorem: main}.

\section*{Acknowledgements}
We thank B.~Dubrovin and G. Panati for helpful
discussions and hints.  TC is a Postdoctoral Fellow of the Fund for Scientific Research - Flanders (Belgium), and was also supported by FWO project G.0455.04,
K.U. Leuven research grant OT/04/21, and
Belgian Interuniversity Attraction Pole P06/02. TC is also grateful to SISSA for hospitality, and to ESF program MISGAM for supporting his stay at SISSA.
 TG  acknowledges support by the MISGAM program
of the European Science Foundation,  by the RTN ENIGMA and  Italian
COFIN 2004 ``Geometric methods in the theory of nonlinear waves and their applications''.

\end{document}